\DeclareMathAlphabet{\mathsc}{U}{rsfs}{m}{n}
\DeclareMathOperator{\Ann}{Ann}
\DeclareMathOperator{\Der}{Der}
\DeclareMathOperator{\Mat}{Mat}
\DeclareMathOperator{\Sl}{Sl}
\DeclareMathOperator{\Skw}{Skw}
\DeclareMathOperator{\Sym}{Sym}
\DeclareMathOperator{\syz}{syz}
\DeclareMathOperator{\tr}{tr}
\newcommand{\CC}{\mathbb{C}}
\newcommand{\QQ}{\mathbb{Q}}
\renewcommand{\d}{\mathrm{d}}
\newcommand{\J}{\mathsc{J}}
\newcommand{\KK}{\mathbb{K}}
\newcommand{\NN}{\mathbb{N}}
\renewcommand{\O}{\mathsc{O}}
\newcommand{\p}{\partial}
\newcommand{\xmid}{\;\middle|\;}
\newcommand{\ideal}[1]{{\left\langle#1\right\rangle}}
\newcommand{\set}[1]{{\left\{#1\right\}}}
\theoremstyle{plain}
\newtheorem{thm}{Theorem}
\newtheorem{cor}[thm]{Corollary}
\newtheorem{prp}[thm]{Proposition}
\newcommand{\pushright}[1]{\ifmeasuring@#1\else\omit\hfill$\displaystyle#1$\fi\ignorespaces}
\newcommand{\pushleft}[1]{\ifmeasuring@#1\else\omit$\displaystyle#1$\hfill\fi\ignorespaces}
\numberwithin{equation}{section}
\begin{document}
\phantom{.}
\newpage
\hfill\hbox{MITP/17-104, UUITP-44/17}
\vspace{5mm}

\title{Complete sets of logarithmic vector fields \\ for integration-by-parts identities of Feynman integrals}

\author{Janko B{\"o}hm}\email{boehm@mathematik.uni-kl.de}
\affiliation{Department of Mathematics, TU Kaiserslautern, 67663 Kaiserslautern, Germany}
\author{Alessandro Georgoudis}\email{Alessandro.Georgoudis@physics.uu.se}
\affiliation{Department of Physics and Astronomy, Uppsala University, SE-75108 Uppsala, Sweden}
\author{Kasper J. Larsen}\email{Kasper.Larsen@soton.ac.uk}
\affiliation{School of Physics and Astronomy, University of Southampton, Highfield, Southampton, \\ SO17 1BJ, United Kingdom}
\author{Mathias Schulze}\email{mschulze@mathematik.uni-kl.de}
\affiliation{Department of Mathematics, TU Kaiserslautern, 67663 Kaiserslautern, Germany}
\author{Yang Zhang}\email{zhang@uni-mainz.de}
\affiliation{ETH Z{\"u}rich, Wolfang-Pauli-Strasse 27, 8093 Z{\"u}rich, Switzerland}
\affiliation{PRISMA Cluster of Excellence, Johannes Gutenberg University, 55128 Mainz, Germany}

\begin{abstract}
Integration-by-parts identities between loop integrals arise
from the vanishing integration of total derivatives in dimensional
regularization. Generic choices of total derivatives in the Baikov
or parametric representations lead to identities which involve
dimension shifts. These dimension shifts can be avoided by imposing
a certain constraint on the total derivatives. The solutions of
this constraint turn out to be a specific type of syzygies
which correspond to logarithmic vector fields along the Gram determinant
formed of the independent external and loop momenta. We present
an explicit generating set of solutions in Baikov representation,
valid for any number of loops and external momenta, obtained
from the Laplace expansion of the Gram determinant.
We provide a rigorous mathematical proof that this set of
solutions is complete. This proof relates the logarithmic vector fields
in question to ideals of submaximal minors of the Gram matrix
and makes use of classical resolutions of such ideals.
\end{abstract}

\maketitle

\section{Introduction}

The increasing precision of the experimental measurements
of scattering processes at the Large Hadron Collider (LHC)
is calling for increased precision in the
theoretical prediction of cross sections.
The computations of the leading-order (LO) and next-to-leading-order (NLO)
contributions to the cross sections are by now automated,
but for many processes the next-to-next-to-leading-order (NNLO)
contribution is needed to reach the required precision.

The NNLO cross section has double-real, real-virtual and
double-virtual contributions. The aim of this paper is to
provide new tools for computing the latter contributions,
i.e., the two-loop scattering amplitudes. Results for the latter
may in turn motivate progress on the combination of all
virtual and real contributions to the NNLO cross section.

A key tool in the calculation of multi-loop amplitudes are
integration-by-parts (IBP) reductions. The latter arise
from the vanishing integration of total derivatives
in dimensional regularization,
\begin{equation}
\int \prod_{i=1}^L \frac{\d^D \ell_i}{\mathrm{i} \pi^{D/2}}
\sum_{j=1}^L \frac{\partial}{\partial \ell_j^\mu}
\frac{v_j^\mu \hspace{0.5mm} P}{D_1^{\nu_1} \cdots D_m^{\nu_m}}
\hspace{1mm}=\hspace{1mm} 0 \,, \label{eq:IBP_schematic}
\end{equation}
where $P$ and the vectors $v_j^\mu$ are polynomial in the loop momenta $\ell_i$ and
external momenta, the $D_k$ denote inverse propagators, and the
$\nu_i$ are integers. The IBP identities turn out to
generate a large set of linear relations between loop integrals.
This then allows for most integrals to be reexpressed as a linear
combination of basis integrals. In practice, the basis contains much fewer integrals
than the number of integrals produced by the Feynman rules
for a multi-loop amplitude.

The step of performing Gaussian elimination on the linear systems
that arise from eq.~\eqref{eq:IBP_schematic} may be carried out
with the Laporta algorithm~\cite{Laporta:2000dc,Laporta:2001dd},
which leads in general to relations that involve integrals with
squared propagators. There are various publicly available implementations
of automated IBP reduction: AIR~\cite{Anastasiou:2004vj},
FIRE~\cite{Smirnov:2008iw,Smirnov:2014hma},
Reduze~\cite{Studerus:2009ye,vonManteuffel:2012np},
LiteRed~\cite{Lee:2012cn}, Kira~\cite{Maierhoefer:2017hyi},
as well as private implementations.
A formalism for obtaining IBP reductions without squared
propagators was developed in refs.~\cite{Gluza:2010ws,Ita:2015tya}.
A systematic method of deriving IBP reductions on
generalized-unitarity cuts was given in ref.~\cite{Larsen:2015ped}.
A recent approach~\cite{vonManteuffel:2014ixa} uses sampling
over finite fields to construct the reduction coefficients.
Other recent developments include software for determining a
basis of integrals \cite{Georgoudis:2016wff} and a
D-module theory based approach for computing the number
of basis integrals \cite{Bitoun:2017nre}.

The IBP reductions moreover allow setting up differential equations
for the basis integrals, thereby enabling their evaluation.%
~\cite{Kotikov:1990kg,Kotikov:1991pm,Bern:1993kr,Remiddi:1997ny,Gehrmann:1999as,Henn:2013pwa,%
Papadopoulos:2014lla,Ablinger:2015tua,Liu:2017jxz}.
Differential equations have proven to be a powerful tool
for calculating multi-loop integrals, enabling for example
the computation of the basis integrals for numerous
two-loop amplitudes of $2 \to 2$ processes. This method
can therefore reasonably be expected to be of relevance
to two-loop amplitudes for $2 \to 3$ processes. We note that,
in the context of the latter, impressive results have recently
appeared \cite{Gehrmann:2015bfy,Papadopoulos:2015jft,Badger:2017jhb,Abreu:2017hqn}.

In this paper we study integration-by-parts identities \eqref{eq:IBP_schematic}.
Generic choices of total derivatives in the Baikov
or parametric representations lead to identities which involve undesirable
dimension shifts. These dimension shifts can be avoided by imposing
a certain constraint on the total derivatives. The solutions of
this constraint turn out to be a specific type of syzygies
which correspond to logarithmic vector fields along the Gram determinant
formed of the independent external and loop momenta. We will present
an explicit generating set of solutions in Baikov representation,
valid for any number of loops and external momenta, obtained
from the Laplace expansion of the Gram determinant.
We will then present a rigorous mathematical proof that this set of
solutions is complete. This proof relates the logarithmic vector fields
in question to ideals of submaximal minors of the Gram matrix
and makes use of classical resolutions of such ideals.

An important feature of the obtained generating set of syzygies
is that they are guaranteed to have degree at most one. In contrast, a generating
set of syzygies obtained from an S-polynomial computation would
in general have higher degrees. The fact that the syzygies obtained
here are of degree at most one is useful because it dramatically
simplifies the computation of solutions which satisfy
further constraints. For example, one may be interested in
imposing the further constraint on the total derivatives that
no integrals with squared propagators
are encountered in the integration-by-parts identities.

This paper is organized as follows.
In Sec.~\ref{sec:Baikov_representation}
we set up notation and give the Baikov representation
of a generic Feynman loop integral. In Sec.~\ref{sec:IBP_identities_on_cuts}
we study integration-by-parts relations on unitarity cuts
and derive the syzygy equation of interest.
In Sec.~\ref{sec:syzygies_from_Laplace_expansion} we obtain
a closed-form generating set of solutions to the syzygy equation.
In Sec.~\ref{sec:proof_of_completeness_of_syzygies}
we present a proof that the set of syzygies is complete.
In Sec.~\ref{sec:examples} we provide an example
of the formalism.
In Sec.~\ref{sec:conclusions} we give our conclusions.

\section{Baikov representation of loop integrals}\label{sec:Baikov_representation}

In this paper we will make use of the Baikov representation
of Feynman loop integrals. As will be explained later, this
parametrization is useful for our purpose of studying
integration-by-parts identities \eqref{eq:IBP_schematic}
on so-called cuts where some number of propagators are
put on shell, i.e. after evaluating the residue at $D_\alpha = 0$.
Since the Baikov representation uses the inverse propagators
$D_\alpha$ as variables, it greatly facilitates the application of cuts.

In this section we fix our notations
and review the Baikov representation of a general Feynman loop integral.
We consider an integral with $L$ loops and $k$ propagators.
We denote the loop momenta as $\ell_1, \ldots, \ell_L$
and the external momenta as $p_1, \ldots, p_E, p_{E+1}$,
where $E$ thus denotes the number of linearly independent external
momenta. Furthermore, the integrand may involve $m-k$
irreducible scalar products---that is, polynomials in the
loop momenta and external momenta which cannot be expressed
as a linear combination of the inverse propagators. As will
be shown below, $m$ is a function of $L$ and $E$.
We apply dimensional regularization to regulate infrared and ultraviolet
divergences and normalize the integral as follows,
\begin{equation}
I(\nu_1, \ldots, \nu_m; D) \equiv \int \prod_{j=1}^L \frac{\d^D \ell_j}{\mathrm{i} \pi^{D/2}}
\frac{N_{k,m}}{D_1^{\nu_1} \cdots D_k^{\nu_k}} \,, \hspace{4mm} \nu_i \geq 0 \,.
\label{eq:def_generic_Feynman_integral}
\end{equation}
The inverse propagators $D_j$ are of the form $P^2$ where
$P$ is an integer-coefficient linear combination of vectors taken from
the ordered set of all independent external and loop momenta,
\begin{equation}
V = ( v_1, \ldots, v_{E+L}) = ( p_1, \ldots, p_E, \ell_1, \ldots, \ell_L ) \,.
\label{eq:list_of_momenta}
\end{equation}
Furthermore, the quantity $N_{k,m}$
in eq.~\eqref{eq:def_generic_Feynman_integral} is defined as
$N_{k,m} \equiv D_{k+1}^{\nu_{k+1}} \cdots D_m^{\nu_m}$.

We now proceed to present the Baikov representation \cite{Baikov:1996rk} of
the integral \eqref{eq:def_generic_Feynman_integral}. To this end, we start
by writing down the Gram matrix $S$ of the independent external and loop momenta,
\begin{equation}
S =
\left(
 \begin{array}{ccc|ccc}
  x_{1,1}  &  \hspace{-1mm} \cdots \hspace{-1mm}  &  \hspace{-1mm} x_{1,E}    &  x_{1,E+1}    &  \hspace{-1mm} \cdots \hspace{-1mm}  &  \hspace{-1mm} x_{1,E+L} \\
  \vdots   &  \hspace{-1mm} \ddots \hspace{-1mm}  &  \hspace{-1mm} \vdots     &  \vdots       &  \hspace{-1mm} \ddots \hspace{-1mm}  &  \hspace{-1mm} \vdots  \\
  x_{E,1}  &  \hspace{-1mm} \cdots \hspace{-1mm}  &  \hspace{-1mm} x_{E,E}    &  x_{E,E+1}    &  \hspace{-1mm} \cdots \hspace{-1mm}  &  \hspace{-1mm} x_{E,E+L} \\
\hline
 x_{E+1,1} &  \hspace{-1mm} \cdots \hspace{-1mm}  &  \hspace{-1mm} x_{E+1,E}  &  x_{E+1,E+1}  &  \hspace{-1mm} \cdots \hspace{-1mm}  &  \hspace{-1mm} x_{E+1,E+L} \\
 \vdots    &  \hspace{-1mm} \ddots \hspace{-1mm}  &  \hspace{-1mm} \vdots     &  \vdots       &  \hspace{-1mm} \ddots \hspace{-1mm}  &  \hspace{-1mm} \vdots  \\
 x_{E+L,1} &  \hspace{-1mm} \cdots \hspace{-1mm}  &  \hspace{-1mm} x_{E+L,E}  &  x_{E+L,E+1}  &  \hspace{-1mm} \cdots \hspace{-1mm}  &  \hspace{-1mm} x_{E+L,E+L}
\end{array} \right) \,,
\label{eq:extended_Gram_matrix}
\end{equation}
where the entries are given by,
\begin{equation}
x_{i,j} = v_i \cdot v_j \,, \label{eq:x_ij_definition}
\end{equation}
where $v_i$ and $v_j$ are entries of $V$ in eq.~\eqref{eq:list_of_momenta}.
In addition, we let $F$ denote the determinant of $S$,
\begin{equation}
F \equiv \det S \,.
\label{eq:definition_of_Baikov_polynomial}
\end{equation}

The entries of the upper-left $E \times E$ block of $S$
are constructed out of the external momenta only, and
it will be convenient for the following to emphasize
this by relabeling these entries,
\begin{equation}
\lambda_{i,j}  =  x_{i,j} \hspace{5mm} \mathrm{for} \hspace{5mm} 1 \leq i, j \leq E \,.
\label{eq:definition_of_lambda}
\end{equation}
Furthermore we define $G$ as the Gram matrix of
the independent external momenta,
\begin{equation}
G =
\begin{pmatrix}
\lambda_{1,1} & \cdots & \lambda_{1,E} \\
\vdots & \ddots & \vdots \\
\lambda_{E,1} & \cdots & \lambda_{E,E}
\end{pmatrix} \,,
\label{eq:Gram_matrix_of_external_momenta}
\end{equation}
\clearpage
\noindent and let $U$ denote its determinant,
\begin{equation}
U = \det G \,. \label{eq:definition_of_U}
\end{equation}
We remark that $U$ is equal to the square of the volume of the
parallelotope formed by the independent external momenta $\{ p_1, \ldots, p_E\}$.
Thus, $U$ is non-vanishing provided that $p_1, \ldots, p_E$
are not linearly dependent.

The entries of the remaining blocks of $S$ depend on the loop momenta.
As $S$ is a symmetric matrix, not all entries are independent.
We can choose as a set of independent entries
for example the entries of the upper-right $E \times L$ block
along with the upper-triangular entries of the lower-right $L \times L$ block,
\begin{equation}
x_{i,j} \hspace{3mm} \mathrm{where} \hspace{2mm}
\left\{  \begin{array}{l}
1 \leq i \leq E \hspace{3mm} \mathrm{and} \hspace{3mm} E{+}1 \leq j \leq E{+}L \,, \\[2mm]
E{+}1 \leq i \leq j \leq E{+}L \,.
\end{array}  \right.
\label{eq:independent_loop_containing_entries}
\end{equation}
Hence we find that $S$ contains $LE$ + $\frac{L(L+1)}{2}$ independent
entries which depend on the loop momenta. From the fact that each
inverse propagator $D_\alpha$ is the square of a linear combination
of the elements of $V$ in eq.~\eqref{eq:list_of_momenta} and the fact
that the elements of $V$ are linearly independent, it follows that
$D_\alpha$ can be written as a unique linear combination of the $x_{i,j}$
in eq.~\eqref{eq:independent_loop_containing_entries}.
We therefore conclude that the combined number of propagators and
irreducible scalar products in eq.~\eqref{eq:def_generic_Feynman_integral}
is given by the expression,
\begin{equation}
m = LE + \frac{L(L+1)}{2} \,.
\label{eq:number_of_independent_loop_momentum_dot_products}
\end{equation}
Keeping the relabeling in eq.~\eqref{eq:definition_of_lambda}
in mind, we can write any inverse propagator $D_\alpha$
(with $\alpha = 1,\ldots, m$) as an explicit linear
combination of the $x_{i,j}$
in eq.~\eqref{eq:independent_loop_containing_entries}
as follows,
\begin{equation}
D_\alpha = \sum_{\beta=1}^m A_{\alpha,\beta} x_\beta + \sum_{1\leq k \leq l \leq E} (B_\alpha)_{k,l} \lambda_{k,l}
\hspace{1mm}-\hspace{0.5mm} m_\alpha^2 \,, \label{eq:relation_of_z_to_x}
\end{equation}
where $A_{\alpha, \beta}$ and the entries of $B_\alpha$ are
integers. In writing this expression we introduced a
lexicographic order on the set of elements $(i,j)$ in
eq.~\eqref{eq:independent_loop_containing_entries} and
let $\beta = 1, \ldots, m$ denote the element label
in the ordered set.

The variables of the Baikov representation \cite{Baikov:1996rk}
are chosen as the inverse propagators and the irreducible scalar products,
\begin{equation}
z_\alpha \equiv D_\alpha \hspace{4mm} \mathrm{where} \hspace{4mm} 1 \leq \alpha \leq m \,.
\label{eq:definition_of_z}
\end{equation}
We can now present the Baikov representation of the integral in eq.~\eqref{eq:def_generic_Feynman_integral}.
It takes the following form%
\footnote{
We remark that the Baikov representation in eq.~\eqref{eq:Baikov_representation}
is consistent with that used in ref.~\cite{Larsen:2015ped}.
This is a consequence of the identity $\det_{i,j=1,\ldots,L} \mu_{i,j} = \frac{F}{U}$,
which in turn follows from the Schur complement theorem in linear algebra.
Moreover, ref.~\cite{Larsen:2015ped} makes use of the four-dimensional helicity
scheme. It is therefore imposed as a constraint on
the external momenta that they span a vector space of
dimension at most four. In order words, one must have
$\mathrm{dim} \hspace{0.9mm} \mathrm{span} \{ p_1, \ldots, p_E\} \leq 4$.
Accordingly, the exponent of the Baikov polynomial $F$
is modified from $\frac{D-L-E-1}{2}$ in eq.~\eqref{eq:Baikov_representation}
to $\frac{D-L-5}{2}$ and $\frac{D-L-4}{2}$
for $E \geq 4$ and $E=3$, respectively.
},
\begin{align}
I(\nu; D)  \hspace{0.5mm}&=\hspace{0.5mm}  C_E^L \hspace{0.7mm} U^\frac{E-D+1}{2} \hspace{-1mm}
\int \frac{\d z_1 \cdots \d z_m}{z_1^{\nu_1} \cdots z_k^{\nu_k}} F^\frac{D-L-E-1}{2} N_{k,m} \,,
\label{eq:Baikov_representation}
\end{align}
where the first prefactor is given by the expression,
\begin{equation}
C_E^L \hspace{0.7mm} \equiv \hspace{0.7mm} \frac{\pi^{-L(L-1)/4 - L E/2}}
{\prod_{j=1}^L \Gamma \hspace{-0.5mm} \left(\frac{D-L-E+j}{2} \right)} \det A \,,
\label{eq:Baikov_prefactor}
\end{equation}
where $A$ is the matrix defined in eq.~\eqref{eq:relation_of_z_to_x}.

\section{Integration-by-parts identities on unitarity cuts}\label{sec:IBP_identities_on_cuts}

In this section we consider integration-by-parts identities
\eqref{eq:IBP_schematic} on cuts where some number of propagators
are put on shell, i.e. roughly speaking $\frac{1}{D_i} \to \delta(D_i)$.
This has the advantage of reducing the linear systems to which
Gauss-Jordan elimination is to be applied. As explained in ref.~\cite{Larsen:2015ped},
it is possible to determine complete integration-by-parts reductions
by performing the reductions on a suitably chosen spanning set of cuts
and merge the information found on each cut.

The virtue of the Baikov representation \eqref{eq:Baikov_representation}
is that it makes manifest the effect of cutting propagators.
Cf.~refs.~\cite{Ita:2015tya,Larsen:2015ped}, we consider
applying a $c$-fold cut (where $0\leq c \leq k$)
to eq.~\eqref{eq:Baikov_representation}. We let $\mathcal{S}_\mathrm{cut}$,
$\mathcal{S}_\mathrm{uncut}$ and $\mathcal{S}_\mathrm{ISP}$ denote
the sets of indices labeling cut propagators, uncut propagators and
irreducible scalar products respectively, and set,
\begin{align}
\mathcal{S}_\mathrm{cut}    &=  \{ \zeta_1, \ldots, \zeta_c \} \nonumber \\
\mathcal{S}_\mathrm{uncut}  &=  \{ r_1, \ldots, r_{k-c}\} \\
\mathcal{S}_\mathrm{ISP}    &=  \{ r_{k-c+1}, \ldots, r_{m-c}\} \nonumber \,.
\end{align}
We will restrict the analysis to the case where the
propagator powers in eq.~\eqref{eq:def_generic_Feynman_integral}
are equal to one, $\nu_i = 1$.

The result of applying the cut,
\begin{equation}
\int \frac{\d z_i}{z_i} \hspace{4mm} \mathop{\longrightarrow^{\hspace{-5.7mm} \mathrm{cut}}} \hspace{4mm}
\oint_{\Gamma_\epsilon (0)} \frac{\d z_i}{z_i}
\hspace{5mm} \mathrm{where} \hspace{5mm} i \in \mathcal{S}_\mathrm{cut} \,,
\end{equation}
where $\Gamma_\epsilon (0)$ denotes a circle centered at $0$ of radius $\epsilon > 0$
to eq.~\eqref{eq:Baikov_representation} is obtained by evaluating the residue at
$z_i = 0$ where $i \in \mathcal{S}_\mathrm{cut}$,
\begin{align}
I_\mathrm{cut} (\nu; D) \hspace{0.5mm}&=\hspace{0.5mm} C_E^L(D) U^\frac{E-D+1}{2} \nonumber \\
& \hspace{-10mm} \times \int \frac{\d z_{r_1} \cdots \d z_{r_{m-c}} N_{k,m}}{z_{r_1} \cdots z_{r_{k-c}}}
F(z)^\frac{D-L-E-1}{2} \Big|_{z_i = 0, \hspace{1mm} i \in \mathcal{S}_\mathrm{cut}}\,.
\label{eq:Baikov_representation_on_cut}
\end{align}

We now turn to integration-by-parts identities evaluated
on the $c$-fold cut $\mathcal{S}_\mathrm{cut}$. Such identities
correspond to exact differential forms of degree $m-c$. The most
general exact differential form which is of the form
of the integrand of eq.~\eqref{eq:Baikov_representation_on_cut} is,
\begin{align}
0 &= \int \d \Big( \sum_{i=1}^{m-c} \frac{(-1)^{i+1} a_{r_i} F(z)^\frac{D-L-E-1}{2}}
{z_{r_1} \cdots z_{r_{k-c}}} \nonumber \\
& \hspace{14mm} \times \d z_{r_1} \wedge \cdots \wedge
\widehat{\d z_{r_i}} \wedge \cdots \wedge \d z_{r_{m-c}} \Big) \,,
\label{eq:IBP_relation_differential_form}
\end{align}
where the $a_i$ are polynomials in $\{ z_{r_1}, \ldots, z_{r_{m-c}} \}$.
Expanding eq.~\eqref{eq:IBP_relation_differential_form},
we get an integration-by-parts identity,
\begin{align}
0 &= \int \hspace{-0.5mm} \Big[ \sum_{i=1}^{m-c} \hspace{-0.8mm} \Big( \frac{\partial a_{r_i}}{\partial z_{r_i}}
+ \frac{D{-}L{-}E{-}1}{2 F(z)} a_{r_i} \frac{\partial F}{\partial z_{r_i}} \Big)
- \sum_{i=1}^{k-c} \frac{a_{r_i}}{z_{r_i}} \Big] \nonumber \\
&\hspace{12mm}\times \frac{F(z)^\frac{D{-}L{-}E{-}1}{2}}{z_{r_1} \cdots z_{r_{k-c}}}
\d z_{r_1} \hspace{-1mm} \cdots \d z_{r_{m-c}} \,.
\label{eq:IBP_relation_ansatz}
\end{align}
We observe that, for an arbitrary choice of polynomials $a_i(z)$,
the two terms in the parenthesis $(\cdots)$
in eq.~\eqref{eq:IBP_relation_ansatz} correspond to
integrals in $D$ and $D-2$ dimensions, respectively.
This is because the $\frac{1}{F(z)}$ factor in the
second term has the effect of modifying the integration measure,
thereby shifting the space-time dimension from $D$ to $D-2$.

To get the exact form in eq.~\eqref{eq:IBP_relation_differential_form}
to correspond to an integration-by-parts relation in $D$ dimensions,
we require the $a_i(z)$ to be chosen such that,
\begin{equation}
b F + \sum_{i=1}^{m-c} a_{r_i} \frac{\partial F}{\partial z_{r_i}} = 0 \,,
\label{eq:syzygy_equation}
\end{equation}
where $b$ denotes a polynomial, since then the $\frac{1}{F}$ factor
in eq.~\eqref{eq:IBP_relation_ansatz} cancels out, and no dimension shift occurs.
Equations of the type \eqref{eq:syzygy_equation} are known in algebraic geometry as
\emph{syzygy equations} (describing in our setting the polynomial relations---that is,
syzygies, between $F, \frac{\partial F}{\partial z_{r_1}},\ldots,\frac{\partial F}{\partial z_{r_{m-c}}}$). They have also been considered
in the context of integration-by-parts relations in
refs.~\cite{Gluza:2010ws,Schabinger:2011dz,Lee:2014tja,Ita:2015tya,Larsen:2015ped,Bern:2017gdk}.
We remark that it follows from Schreyer's theorem
that a generating set of solutions of eq.~\eqref{eq:syzygy_equation}
can be found algebraically by determining a Gr{\"o}bner basis of the ideal generated by the above polynomials, considering the S-polynomials involved in the Buchberger test, and expressing the corresponding relations in terms of the original generators \cite{opac-b1094391}.
We refer to refs.~\cite{Ita:2015tya,Georgoudis:2016wff} for a geometric
interpretation of eq.~\eqref{eq:syzygy_equation}.

\section{Syzygy generators from Laplace expansion}\label{sec:syzygies_from_Laplace_expansion}

In this section we turn to obtaining a generating set
$\mathcal{T} = \langle g_1, \ldots, g_d\rangle$
of syzygies $g_i = (a_{r_1}, \ldots, a_{r_{m-c}}, b)$ of
eq.~\eqref{eq:syzygy_equation}. By this we mean that
$\mathcal{T}$ is such that any solution of eq.~\eqref{eq:syzygy_equation}
can be written in the form $g_i p$ where $g_i \in \mathcal{T}$
and $p$ denotes a polynomial.

For a general polynomial $F$, determining a generating
set of syzygies would require an $S$-polynomial computation.
However, as we will shortly see, in the case where $F$ is
the determinant of a matrix, a generating set of
syzygies can be obtained from the Laplace expansion
of the determinant of $F$. We remark that related work
has appeared in ref.~\cite{Bern:2017gdk}.

\subsection{Off-shell case}

For simplicity we start with the case where no cuts are applied, $c=0$.
Let $M = (m_{i,j})_{i,j=1,\ldots, n}$ be a generic matrix, i.e. such
that all entries are independent. We consider the determinant of $M$
and perform Laplace expansion of the determinant along the $i$th row,
\begin{equation}
\left[ \sum_{k=1}^n m_{j,k} \frac{\partial (\det M)}{\partial m_{i,k}} \right] - \delta_{i,j} \det M = 0 \,,
\hspace{4mm} 1 \leq i,j \leq n \,.
\label{eq:Laplace_expansion_of_generic_matrix}
\end{equation}
The identities with $i\neq j$ follow by replacing the $i$th row of $M$
by the $j$th row, $m_{i,k} \to m_{j,k}$, as the resulting
matrix clearly has a vanishing determinant.

For a symmetric matrix $S = (s_{i,j})_{i,j=1,\ldots, n}$, the entries satisfy
$s_{i,j}=s_{j,i}$ and are thus not all independent.
For this case, one obtains from the Laplace expansion the following identities,
\begin{equation}
\left[\sum_{k=1}^n (1{+}\delta_{i,k}) s_{j,k} \frac{\partial (\det S)}{\partial s_{i,k}}\right] - 2\delta_{i,j} \det S = 0 \,,
\hspace{2mm} 1 \leq i,j \leq n \,.
\label{eq:Laplace_expansion_of_symmetric_matrix}
\end{equation}
In taking the derivatives one must take into account
that the entries are not independent. To do so, we replace
$s_{j,i} \to s_{i,j}$ with $i \leq j$ in $S$ before taking
derivatives and furthermore replace $\frac{\partial (\det S)}{\partial s_{i,k}}$
with $i>k$ by $\frac{\partial (\det S)}{\partial s_{k,i}}$.

We will now apply the identity \eqref{eq:Laplace_expansion_of_symmetric_matrix}
to the Gram matrix $S$ in eq.~\eqref{eq:extended_Gram_matrix}. However,
before doing so, we note that the first $E$ rows
only contain external invariants $\lambda_{i,j}$ and
entries which also appear in the last $L$ rows by symmetry of $S$. Derivatives
with respect to the $\lambda_{i,j}$ are not of interest in the problem at hand,
since for integration-by-parts identities \eqref{eq:IBP_schematic},
only derivatives with respect to the loop momenta play a role.
We therefore apply the identity \eqref{eq:Laplace_expansion_of_symmetric_matrix}
only to the last $L$ rows of $S$, from which we find,
\begin{equation}
\left[\sum_{k=1}^{E+L} (1{+}\delta_{i,k}) x_{j,k} \frac{\partial F}{\partial x_{i,k}}\right] - 2\delta_{i,j} F = 0 \,,
\label{eq:Laplace_expansion_of_Baikov_polynomial}
\end{equation}
where $E+1 \leq i \leq E+L$ and $1 \leq j \leq E+L$.
We can express the derivatives with respect to $x_{i,k}$
in terms of derivatives with respect to $z_\alpha$
by making use of the chain rule,
\begin{equation}
\frac{\partial F}{\partial x_{i,k}} =
\sum_{\alpha=1}^m \frac{\partial z_\alpha}{\partial x_{i,k}}
\frac{\partial F}{\partial z_\alpha}
\hspace{2mm} \mathrm{for} \hspace{1mm}
\left\{ \begin{array}{l}
1 \hspace{-0.7mm}\leq\hspace{-0.7mm} i \hspace{-0.7mm}\leq\hspace{-0.7mm} E \,,
\hspace{1.3mm} E{+}1 \hspace{-0.7mm}\leq\hspace{-0.7mm} k \hspace{-0.7mm}\leq\hspace{-0.7mm} E{+}L \,, \\[2mm]
E{+}1 \leq i \leq k \leq E{+}L \,.
\end{array} \right.
\label{eq:chain_rule}
\end{equation}
By splitting the sum in eq.~\eqref{eq:Laplace_expansion_of_Baikov_polynomial}
into sums over the first $E$, subsequent $i-1-E$ and $E+L-i+1$
terms and using that $x_{i,k} = x_{k,i}$ for the former two,
application of the chain rule \eqref{eq:chain_rule} yields,
\begin{equation}
\sum_{\alpha=1}^m (a_{i,j})_\alpha \frac{\partial F}{\partial z_\alpha} + b_{i,j} F = 0 \,,
\end{equation}
where $a_{i,j}$ and $b$ are given by the following expressions,
\begin{equation}
(a_{i,j})_\alpha =  \sum_{k=1}^{E+L} (1 + \delta_{i,k}) \frac{\partial z_\alpha}{\partial x_{i,k}} x_{j,k}
\hspace{5mm} \mathrm{and} \hspace{5mm}
b_{i,j}          = -2\delta_{i,j} \,,
\label{eq:syzygy_generators_components}
\end{equation}
for $E+1 \leq i \leq E+L$, $1 \leq j \leq E+L$ and $1 \leq \alpha \leq m$.
We conclude that
\begin{equation}
t_{i,j} = \Big((a_{i,j})_1, \ldots, (a_{i,j})_m, b_{i,j} \Big) \,,
\label{eq:syzygy_generators}
\end{equation}
with $a_{i,j}$ and $b_{i,j}$ given in eq.~\eqref{eq:syzygy_generators_components}
are solutions of eq.~\eqref{eq:syzygy_equation} in the case $c=0$.

We note that it follows from the relations
in eqs.~\eqref{eq:relation_of_z_to_x}--\eqref{eq:definition_of_z}
that the derivatives $\frac{\partial z_\alpha}{\partial x_{i,k}}$
are integers. Furthermore, we may use the relations to express
the $x$-variables as a linear combination of the $z$-variables.
This shows that the syzygies $t_{i,j}$ in eq.~\eqref{eq:syzygy_generators}
are at most linear polynomials in the Baikov variables $z_\alpha$.

We emphasize that the closed-form expressions in
eqs.~\eqref{eq:syzygy_generators_components}--\eqref{eq:syzygy_generators}
are valid for any number of loops and external legs. The only
quantities that depend on the graph in question
are the relations of the $z$-variables to the $x$-variables
in eqs.~\eqref{eq:relation_of_z_to_x}--\eqref{eq:definition_of_z}.
We note that the approach of using Laplacian expansion to
obtain syzygies works equally well in cases where the propagators
are massive, since the variables $x_{i,j}$ in eq.~(\ref{eq:x_ij_definition})
will be independent of the internal mass parameters. These mass parameters
will appear explicitly after the linear transformation from the $x_{i,j}$
variables to the Baikov variables $z_i$. For an explicit example we refer
to Sec.~\ref{sec:example_P_double_box_with_internal_mass}.

We emphasize that the closed-form expressions allow
the construction of purely $D$-dimensional integration-by-parts identities
in cases where S-polynomial based computations of syzygies
are not feasible.
Another important aspect of the syzygies
in eqs.~\eqref{eq:syzygy_generators_components}--\eqref{eq:syzygy_generators}
is that they are of degree one. This would not be guaranteed
for the output of an $S$-polynomial-based computation of
the syzygy generators which in relevant examples (see below) turn out to have higher degrees.
Low-degree syzygies are particularly advantageous if we are interested in
imposing additional constraints on the Ansatz
for the exact form in eq.~\eqref{eq:IBP_relation_ansatz}.
For example, we may demand that no integrals with squared
propagators are encountered in the IBP identities,
\begin{equation}
a_i + b_i z_i = 0 \hspace{5mm} \mathrm{where}
\hspace{5mm} i=1,\ldots,k \,.
\label{eq:no_squared_propagators}
\end{equation}
Namely, we can obtain solutions of eqs.~\eqref{eq:syzygy_equation} and \eqref{eq:no_squared_propagators}
by taking the module intersection of the module of the syzygies
in eqs.~\eqref{eq:syzygy_generators_components}--\eqref{eq:syzygy_generators},
\begin{equation}
\mathcal{T} = \big\langle t_{i,j} \hspace{1.5mm} \big| \hspace{2mm} E{+}1 \leq i \leq E{+}L
\hspace{3mm} \mathrm{and} \hspace{3mm}  1\leq j \leq E{+}L \big\rangle \,,
\label{eq:module_of_syzygies}
\end{equation}
and the module,
\begin{equation}
\mathcal{L} = \langle z_1 e_1, \ldots, z_k e_k, e_{k+1}, \ldots, e_m \rangle \,.
\end{equation}
That is, the generators of $\mathcal{T} \cap \mathcal{L}$ form a
generating set of solutions of eqs.~\eqref{eq:syzygy_equation} and \eqref{eq:no_squared_propagators} \cite{Zhang:2016kfo}.
The fact that the syzygies in eqs.~\eqref{eq:syzygy_generators_components}--\eqref{eq:syzygy_generators}
are of degree one dramatically simplifies the computation
of the module intersection $\mathcal{T} \cap \mathcal{L}$.
We remark that efficient methods for computing module intersections
are presented in ref.~\cite{Boehm:2018fpv}, and that in this reference non-trivial computations
are carried out using these methods for non-planar multi-scale diagrams.

In Sec.~\ref{sec:proof_of_completeness_of_syzygies} we give a proof
that the $L(L+E)$ syzygies in eq.~\eqref{eq:syzygy_generators}
form a generating set.

\subsection{On-shell case}

We now turn to obtaining a generating set of syzygies of
eq.~\eqref{eq:syzygy_equation} for a generic cut
$\mathcal{S}_\mathrm{cut} = \{ \zeta_1, \ldots, \zeta_c \}$.
We start by taking the module of the syzygies in
eq.~\eqref{eq:module_of_syzygies} and evaluating this on the cut $\mathcal{S}_\mathrm{cut}$,
\begin{equation}
\widehat{\mathcal{T}} \hspace{1mm}=\hspace{1mm} \mathcal{T} \big|_{z_q=0, \hspace{1mm} q \in \mathcal{S}_\mathrm{cut} } \,.
\end{equation}
Now, the generators $\widehat{t}_{i,j}$ of $\widehat{\mathcal{T}}$ will not in general
be solutions of eq.~\eqref{eq:syzygy_equation} because
the $\zeta_n$-entries of $\widehat{t}_{i,j}$ may be nonzero on the cut.

This leads us to consider the module,
\begin{equation}
\mathcal{Z} = \langle e_{r_1}, \ldots, e_{r_{m-c}} \rangle \,,
\end{equation}
where $e_{r_i}$ is an $(m+1)$-dimensional unit vector with $1$ in
the $r_i$ entry and $0$ elsewhere. Namely, the generators of the intersection
$\widehat{\mathcal{T}} \cap \mathcal{Z}$ are solutions of
eq.~\eqref{eq:syzygy_equation}.

The module intersection can be found with {\sc Singular}
and in practice takes less than a second to compute.

\section{Proof of completeness of syzygies}\label{sec:proof_of_completeness_of_syzygies}

In this section we show that the $L(L+E)$ syzygies in eqs.~\eqref{eq:syzygy_generators_components}--\eqref{eq:syzygy_generators}
form a generating set of syzygies of eq.~\eqref{eq:syzygy_equation}.
In order to give a formal proof of this fact we adopt a more general setup considering polynomial logarithmic vector fields along determinants of generic (symmetric) square matrices.
We reduce the problem to known resolutions of ideals of submaximal minors of such matrices.

Fix a field $\KK$.
For $0\ne m\in\NN$ denote by $Y=\KK^m$ affine $m$-space.
The coordinate ring of $Y$ is a polynomial ring
\begin{equation}\label{14}
\O=\O_Y=\KK[y_1,\dots,y_m] \,.
\end{equation}
Note that its group of units is $\O^*=\KK^*=\KK\setminus\set{0}$.
Since $\O$ is a Cohen--Macaulay ring, the grade or depth of any ideal of $\O$ equals its height or codimension
(cf.~Cor.~2.1.4 and Thm.~2.1.9 of ref.~\cite{BH93}).

The polynomial vector fields on $Y$ form a free $\O$-module (cf.~Prop.~16.1 of ref.~\cite{MR1322960})
\[
\Theta=\Theta_Y=\Der_\KK(\O)=\bigoplus_{i=1}^m\O\frac\p{\p y_{i}} \,.
\]
A polynomial function
\[
f\colon Y=\KK^m\to\KK
\]
is given by an element $f\in\O$.
The $\O$-submodule of $\Theta$ of logarithmic vector fields along the divisor $(f)$ is defined by (classically for squarefree $f\in\O\setminus\O^*$ and $\KK=\CC$, cf.~Sec.~1 of ref.~\cite{Sai80})
\begin{equation}\label{15}
\Der(-\log(f))=\set{\delta\in\Theta\xmid\delta(f)\in\O f}\subset\Theta_Y \,.
\end{equation}
We denote the ideal of partial derivatives of $f$ by
\[
\J_f=\ideal{\frac{\p f}{\p y_1},\dots,\frac{\p f}{\p y_m}} \,.
\]
Then $\Der(-\log(f))$ identifies with the projection to the first $m$ components of the syzygy module (cf.~eq.~\eqref{eq:syzygy_equation})
\[
\syz\left(\frac{\p f}{\p y_1},\dots,\frac{\p f}{\p y_m},f\right)\cong\syz(\J_f+\O f) \,.
\]
We call $\chi\in\Der(-\log(f))$ an \emph{Euler vector field} for $f$ if
\[
\chi(f)\in\O^*f=\KK^*f \,.
\]
If $f$ admits an Euler vector field, then
\begin{equation}\label{2}
\Der(-\log(f))=\O\chi+\Ann_{\Theta}(f) \,,
\end{equation}
where
\begin{equation}\label{6}
\Ann_{\Theta}(f)=\set{\delta\in\Theta\xmid\delta(f)=0}\cong\syz(\J_f)
\end{equation}
is the annihilator of $f$ in $\Theta$ and can be identified with the syzygy module of $\J_f$.

In the remainder of the section we specialize to the case where $f$ is the determinant of a (symmetric) $n\times n$ matrix where $0\ne n\in\NN$.
We write $\Mat_n(\O)$ for the $\O$-module of $n\times n$ matrices with entries $\O$, $\Sym_n(\O)$ (and $\Skw_n(\O)$) for its submodule of (skew-)symmetric matrices and
\[
\Sl_n(\O)=\ker(\tr\colon\Mat_n(\O)\to \O)
\]
the kernel of the trace map.
For any $A=(a_{i,j})\in\Mat_n(\O)$ denote by $A^\star=(a^\star_{i,j})\in\Mat_n(\O)$ its adjoint matrix and by
\[
I_{n-1}(A)=\ideal{a^\star_{i,j}\xmid1\le i\le j\le n}\lhd \O
\]
its ideal of submaximal minors.

Consider the determinant functions
\[
\det\colon X=\Mat_n(\KK)\to\KK \,,\quad {\det}'\colon X'=\Sym_n(\KK)\to\KK \,.
\]
The preceding discussion applies to both these cases.
The coordinate rings of $X$ and $X'$ are polynomial rings
\begin{align*}
\O_X&=\KK[x_{i,j}\mid 1\le i,j\le n] \,,\\
\O_{X'}&=\KK[x_{i,j}\mid 1\le i\le j\le n]=\O_X/\ideal{x_{i,j}-x_{j,i}} \,.
\end{align*}
The modules of polynomial vector fields on $X$ and $X'$ respectively are
\begin{align*}
\Theta_X&=\Der_\KK(\O_X)=\bigoplus_{i,j}\O_X\frac\p{\p x_{i,j}} \,,\\
\Theta_{X'}&=\Der_\KK(\O_{X'})=\bigoplus_{i\le j}\O_X\frac\p{\p x_{i,j}} \,.
\end{align*}
The following result provides generators of the modules of logarithmic vector fields along $f=\det$ and $f={\det}'$ (cf.~eq.~\eqref{15}).

Denote by $M=(x_{i,j})\in\Mat_n(\O_X)$ the \emph{generic} $n\times n$ matrix and by $S=(x_{i,j})\in\Sym_n(\O_{X'})$ its symmetric counterpart.
Note that
\[
\det=\det M\,, \quad {\det}'=\det S \,.
\]
Assume from now on that $\KK$ has characteristic different from $2$ (which will be the case in our applications).
Goryunov and Mond made the following observation (cf.~Secs.~3.1-3.2 of ref.~\cite{GM05}).


\begin{prp}\label{5}
There are surjective maps
\[
\xymatrix@R=0em{
\Mat_n(\O_X)^{\oplus2}\ar@{->>}[r]^-\pi & \Der(-\log(\det)) \,,\\
(A,B)\ar@{|->}[r] & \sum_{i,j}(MA-BM)_{i,j}\frac{\p}{\p x_{i,j}} \,,\\
\Mat_n(\O_{X'})\ar@{->>}[r]^-{\pi'} & \Der(-\log({\det}')) \,,\\
A\ar@{|->}[r] & \sum_{i\le j}(SA+A^tS)_{i,j}\frac{\p}{\p x_{i,j}} \,.
}
\]
\end{prp}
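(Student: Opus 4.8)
The plan is to verify, for each of the two families, that the stated vector fields actually lie in the module of logarithmic vector fields, and then to establish surjectivity by exhibiting enough of them to hit a known generating set. For the ambient membership, recall that $\det M$ has partial derivatives $\frac{\p(\det M)}{\p x_{i,j}}=M^\star_{j,i}$ (the $(j,i)$-cofactor), so for a vector field $\delta_{A,B}=\sum_{i,j}(MA-BM)_{i,j}\frac{\p}{\p x_{i,j}}$ one computes $\delta_{A,B}(\det M)=\sum_{i,j}(MA-BM)_{i,j}M^\star_{j,i}=\tr\big((MA-BM)M^\star\big)$. Using $M^\star M=MM^\star=(\det M)\,\mathrm{Id}$ and cyclicity of the trace, this equals $\tr(M^\star M A)-\tr(M^\star B M)=\tr(A)\det M-\tr(B)\det M\in\O_X\det M$, so $\delta_{A,B}\in\Der(-\log(\det))$ indeed. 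The symmetric case is the analogous computation with $S$ in place of $M$, using $S^\star=S^{\star t}$ and the symmetrization built into the indexing $i\le j$: one gets $\pi'(A)(\det S)=\tr\big((SA+A^tS)S^\star\big)=\big(\tr(A)+\tr(A^t)\big)\det S=2\tr(A)\det S\in\O_{X'}\det S$, where characteristic $\ne 2$ is not even needed for membership but will matter for surjectivity.

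For surjectivity I would first dispose of the Euler field. Taking $B=0$, $A=\frac1n\mathrm{Id}$ gives $\pi$ of the Euler vector field $\chi=\sum_{i,j}x_{i,j}\frac{\p}{\p x_{i,j}}$ with $\chi(\det M)=n\cdot\frac1n\det M=\det M\in\O_X^*\det M$ (here one needs $n$ invertible in $\KK$, which holds in characteristic $0$ and one notes the char-$2$ hypothesis suffices in the $2\times2$ symmetric case; in general I would simply take $A=E_{11}$, which gives $\chi'(\det M)=\det M$ directly without dividing). So by \eqref{2} it remains to show that $\pi$ surjects onto $\Ann_{\Theta_X}(\det M)\cong\syz(\J_{\det M})$, and similarly for $\pi'$. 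By \eqref{6} this syzygy module is generated by relations among the cofactors $M^\star_{j,i}$, i.e.\ among the entries of the adjoint matrix; equivalently it is the first syzygy module of the ideal $I_{n-1}(M)$ of submaximal minors. Now I would invoke the classical structure of such relations: the kernel-of-trace parametrization. Concretely, taking $B=0$ and $A\in\Sl_n(\O_X)$ (resp.\ $A\in\Sl_n(\O_{X'})$, using that $2\tr(A)=0$ forces $\tr(A)=0$) produces vector fields annihilating the determinant, and one must check these exhaust the annihilator. The cleanest route is the well-known fact — essentially the Hilbert–Burch / Eagon–Northcott description for the generic determinantal ideal of expected codimension, or directly the observation that $MA-BM$ with $\tr A=\tr B=0$ ranges over all of $\ker\big(\Mat_n(\O_X)\to\O_X,\ C\mapsto \tr(CM^\star)\big)$ localized suitably — that the module $\{MA-BM : \mathrm{tr}\,A=\mathrm{tr}\,B=0\}$ equals the full annihilator. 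For this I would localize at the open dense locus where $M$ is invertible (where the annihilator and the image are both readily identified with $\mathfrak{sl}_n$-valued maps conjugated by $M$) and then use that both modules are second syzygies, hence reflexive, hence determined by their behaviour in codimension $\le 1$; since the complement of the invertible locus is the hypersurface $(\det M)$ together with $I_{n-1}(M)$ which has codimension $\ge 2$... wait, one has to be careful, $(\det M)$ itself is codimension $1$, so I would instead localize at the prime $(\det M)$ and argue there directly, or appeal to the explicit resolution of $I_{n-1}(M)$ cited in the paper.

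The main obstacle is precisely this last step: proving that the candidate maps $(A,B)\mapsto MA-BM$ (resp.\ $A\mapsto SA+A^tS$) hit \emph{all} syzygies of $\J_{\det}$ and not merely some of them. Membership and the Euler field are routine; surjectivity onto the annihilator is the heart of the matter, and it is where the ``classical resolutions of ideals of submaximal minors'' advertised in the abstract must be brought to bear — one wants to know the minimal first syzygies of $I_{n-1}(M)$ explicitly (these are the Eagon–Northcott-type relations $\sum_k x_{j,k}M^\star_{k,l}=\delta_{j,l}\det M$ and their symmetrizations) and to recognize each such relation as $\pi(A,B)$ for suitable permutation-type elementary $A,B$. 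I would structure the write-up so that this reduces to a finite-dimensional linear algebra statement over $\O$: that the elementary matrices $E_{pq}$ and their differences generate, under the map, the span of the known minimal syzygies. The symmetric case carries the extra subtlety that only $i\le j$ indices are available, so the bookkeeping of how $SA+A^tS$ distributes over the symmetric coordinates must be done carefully, and it is here that $\mathrm{char}\,\KK\ne2$ is genuinely used to invert the factor of $2$ relating $A$ and $A^t$ contributions.
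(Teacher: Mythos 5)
Your membership computation and your treatment of the Euler vector field are correct and coincide with the paper's: $\pi(A,B)(\det)=\tr\bigl((MA-BM)M^\star\bigr)=(\tr A-\tr B)\det$ and $\pi'(A)({\det}')=2\tr(A)\,{\det}'$ are exactly the Laplace-expansion argument, and $\pi((\delta_{i,j}),0)=\chi$, $\pi'((\delta_{i,j}))=2\chi'$ supply the Euler field, so by eq.~\eqref{2} everything reduces to surjectivity onto the annihilator. (Your $A=E_{11}$ variant for the Euler field is a nice touch that avoids dividing by $n$, though it is not needed here since $\KK$ has characteristic $0$ in the application.)

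The genuine gap is the step you flag yourself: showing that $\set{MA-BM\xmid \tr A=\tr B=0}$ exhausts $\ker\bigl(C\mapsto\tr(M^\star C)\bigr)$, equivalently that $\pi\bigl(\Sl_n(\O_X)^{\oplus2}\bigr)=\Ann_{\Theta_X}(\det)$, and its symmetric analogue. None of the routes you sketch closes it. The localization-plus-reflexivity argument fails for the reason you notice mid-sentence: the non-invertible locus of $M$ is the hypersurface $(\det M)$, which has codimension $1$, so agreement of the two modules on the invertible locus says nothing. And your fallback --- that the ``minimal first syzygies of $I_{n-1}(M)$'' are the Laplace-expansion relations $\sum_k x_{j,k}m^\star_{k,l}=\delta_{j,l}\det M$ and their column-side counterparts --- is precisely the assertion that needs proof, not a fact available independently: those relations are just the images $\pi(E_{l,j},0)$ and $\pi(0,E_{j,l})$, so invoking them as ``the'' generators of the syzygy module is circular. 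What actually closes the gap, and what the paper does, is to quote the Gulliksen--Neg{\aa}rd and J{\'o}zefiak resolutions: for the generic (respectively generic symmetric) matrix the sequences $\Sl_n(\O_X)^{\oplus2}\xrightarrow{(A,B)\mapsto MA-BM}\Mat_n(\O_X)\xrightarrow{C\mapsto\tr(M^\star C)}I_{n-1}(M)\to0$ and $\Sl_n(\O_{X'})\xrightarrow{A\mapsto SA+A^tS}\Sym_n(\O_{X'})\xrightarrow{D\mapsto\tr(S^\star D)}I_{n-1}(S)\to0$ are exact, and exactness at the middle term \emph{is} the desired surjectivity, via the identifications $\tr(M^\star C)=\sum_{i,j}c_{i,j}\frac{\p\det}{\p x_{i,j}}$ and $\J_{\det}=I_{n-1}(M)$. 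You name these resolutions as a possible appeal but never commit to them; without that citation (or an independent proof of the exactness, which is a substantial theorem) the argument is incomplete.
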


\begin{proof}
Since
\begin{equation}\label{1}
\frac{\p\det}{\p x_{i,j}}=m^\star_{j,i} \,,\quad
\frac{\p{\det}'}{\p x_{i,j}}=(2-\delta_{i,j})s^\star_{i,j} \,,
\end{equation}
we have
\begin{equation}\label{3}
\J_{\det}=I_{n-1}(M) \,,\quad \J_{{\det}'}=I_{n-1}(S) \,,
\end{equation}
and, by Laplace expansion, $\pi$ and $\pi'$ map to the given target.
Since both $\det$ and ${\det}'$ are homogeneous, they admit standard Euler vector fields
\[
\chi=\sum_{i,j}x_{i,j}\frac\p{\p x_{i,j}} \,,\quad
\chi'=\sum_{i\le j}x_{i,j}\frac\p{\p x_{i,j}} \,.
\]
Note that
\begin{equation}\label{4}
\pi((\delta_{i,j}),0)=\chi \,,\quad\pi'((\delta_{i,j}))=2\chi \,.
\end{equation}
By Gulliksen--Neg{\aa}rd~\cite{GN72} and J{\'o}zefiak~\cite{Joz78} respectively, there are exact sequences
\[
\xymatrix@R=0em@C=2em{
\Sl_n(\O_X)^{\oplus2}\ar[r] & \Mat_n(\O_X)\ar[r] & I_{n-1}(M)\ar[r] & 0 \,,\\
(A,B)\ar@{|->}[r] & MA-BM \,,\\
& C\ar@{|->}[r] & \tr(M^\star C) \,,\\
\Sl_n(\O_{X'})\ar[r] & \Sym_n(\O_{X'})\ar[r] & I_{n-1}(S)\ar[r] & 0 \,,\\
A\ar@{|->}[r] & SA+A^tS \,,\\
& D\ar@{|->}[r] & \tr(S^\star D) \,,
}
\]
where, using eq.~\eqref{1},
\begin{align}\label{7}
\begin{split}
\tr(M^\star C)&=\sum_{i,j}m^\star_{j,i}c_{i,j}=\sum_{i,j}c_{i,j}\frac{\p\det}{\p x_{i,j}} \,,\\
\tr(S^\star D)&=\sum_{i,j}s^\star_{i,j}d_{i,j}
=\sum_{i\le j}(2-\delta_{i,j})s^\star_{i,j}d_{i,j}\\
&=\sum_{i\le j}d_{i,j}\frac{\p{\det}'}{\p x_{i,j}} \,.
\end{split}
\end{align}
Using eqs.~\eqref{6} and \eqref{3} this means that
\begin{align*}
\pi(\Sl_n(\O_X)^{\oplus2})&=\Ann_{\Theta_X}(\det) \,,\\
\pi'(\Sl_n(\O_{X'}))&=\Ann_{\Theta_{X'}}({\det}') \,.
\end{align*}
With eqs.~\eqref{2} and \eqref{4} surjectivity of $\pi$ and $\pi'$ follows.
\end{proof}


Dropping the genericity hypothesis, we now consider polynomial matrix families
\[
\xymatrix@R=0em{
Y=\KK^m\ar[r]^-M & \Mat_n(\KK)=X \,,\\
Y=\KK^m\ar[r]^-S & \Sym_n(\KK)=X' \,.
}
\]
They are defined by matrices $M\in\Mat_n(\O)$ and $S\in\Sym_n(\O)$ with entries in $\O=\O_Y$ (cf.~eq.~\eqref{14}).
By abuse of notation, we set
\begin{align*}
\J_M&=\ideal{\frac{\p M}{\p y_k}\xmid k=1,\dots,m}\subset\Mat_n(\O) \,,\\
\J_S&=\ideal{\frac{\p S}{\p y_k}\xmid k=1,\dots,m}\subset\Sym_n(\O) \,.\\
\end{align*}
Consider the (truncated) Gulliksen--Neg{\aa}rd and J{\'o}zefiak complexes
\begin{align}\label{8}
\begin{split}
\xymatrix@R=0em@C=2em{
\Sl_n(\O)^{\oplus2}\ar[r]^-\rho & \Mat_n(\O)\ar[r] & I_{n-1}(M)\ar[r] & 0 \,,\\
\Sl_n(\O)\ar[r]^-{\rho'} & \Sym_n(\O)\ar[r] & I_{n-1}(S)\ar[r] & 0 \,.
}
\end{split}
\end{align}


\begin{prp}\label{10}\
\begin{asparaenum}[(a)]

\item\label{10a} If $I_{n-1}(M)$ has (the maximal) codimension $4$, then there is a surjective map
\[
\xymatrix@R=0em@C=0.5em{
\rho^{-1}(\hspace{-0.2mm}\J_M \hspace{-0.5mm})\ar@{->>}[rr]^-\pi && \Ann_{\Theta}(\det M),\\
(A,B)\ar@{|->}[r] & MA-BM=\sum_{k=1}^mc_k\frac{\p M}{\p y_k}\ar@{|->}[r] & \sum_{k=1}^mc_k\frac{\p }{\p y_k} \,.
}
\]
In particular, if $\det M$ admits an Euler vector field $\chi\in\Theta$, then $\Der(-\log(\det M))$ is generated by $\chi$ and the image of $\pi$.

\item\label{10b} If $I_{n-1}(S)$ has (the maximal) codimension $3$, then there is a surjective map
\[
\xymatrix@R=0em@C=0.5em{
\rho'^{-1}(\hspace{-0.2mm}\J_S \hspace{-0.5mm})\ar@{->>}[rr]^-{\pi'} && \Ann_{\Theta}(\det S) \,,\\
A\ar@{|->}[r] & SA+A^tS=\sum_{k=1}^mc_k\frac{\p S}{\p y_k}\ar@{|->}[r] & \sum_{k=1}^mc_k\frac{\p }{\p y_k} \,.
}
\]
In particular, if $\det S$ admits an Euler vector field $\chi\in\Theta$, then $\Der(-\log(\det S))$ is generated by $\chi$ and the image of $\pi'$.

\end{asparaenum}
\end{prp}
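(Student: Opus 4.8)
The plan is to deduce Proposition~\ref{10} from Proposition~\ref{5} by a base-change argument along the polynomial matrix family $M\colon Y\to X$ (respectively $S\colon Y\to X'$), using the codimension hypothesis to guarantee that the relevant resolutions stay acyclic after specialization. First I would recall that the Gulliksen--Neg{\aa}rd complex resolving $I_{n-1}(M)\lhd\O_X$ and the J{\'o}zefiak complex resolving $I_{n-1}(S)\lhd\O_{X'}$ are resolutions of length equal to the maximal codimension ($4$ and $3$ respectively). By the standard acyclicity criterion (Buchsbaum--Eisenbud, e.g.\ as in ref.~\cite{BH93}), such a complex, pulled back along the ring map $\O_X\to\O$ (resp.\ $\O_{X'}\to\O$) induced by $M$ (resp.\ $S$), remains acyclic precisely when the specialized ideal $I_{n-1}(M)$ (resp.\ $I_{n-1}(S)$) attains the same codimension in the Cohen--Macaulay ring $\O$ — which is exactly the hypothesis in \eqref{10a} (resp.\ \eqref{10b}). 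Hence the truncated complexes \eqref{8} are still exact on the left, i.e.\ $\ker(\Mat_n(\O)\to I_{n-1}(M))=\rho(\Sl_n(\O)^{\oplus2})$ and likewise for $S$.

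Next I would run the identical computation as in the proof of Proposition~\ref{5}: the entries of the adjoint give $\J_{\det M}=I_{n-1}(M)$ and $\J_{\det S}=I_{n-1}(S)$ via \eqref{1}, and the chain rule gives $\frac{\p(\det M)}{\p y_k}=\sum_{i,j}\frac{\p x_{i,j}}{\p y_k}m^\star_{j,i}=\tr\!\bigl(M^\star\frac{\p M}{\p y_k}\bigr)$, with the analogous identity for $\det S$ using the $(2-\delta_{i,j})$ weights as in \eqref{7}. Therefore a pair $(A,B)$ with $MA-BM=\sum_k c_k\frac{\p M}{\p y_k}$ maps under the composite (trace against $M^\star$, then chain rule) to $\sum_k c_k\frac{\p(\det M)}{\p y_k}$; combining this with the exactness of the J{\'o}zefiak/Gulliksen--Neg{\aa}rd complex on the right shows that the image of $\rho^{-1}(\J_M)$ under $(A,B)\mapsto\sum_k c_k\frac{\p}{\p y_k}$ is exactly $\syz(\J_{\det M})=\Ann_\Theta(\det M)$ via \eqref{6}; same for $S$. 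This yields the asserted surjectivity of $\pi$ and $\pi'$ onto the annihilators. The final ``in particular'' clauses are then immediate from \eqref{2}: if $\det M$ (resp.\ $\det S$) admits an Euler vector field $\chi$, then $\Der(-\log(\det M))=\O\chi+\Ann_\Theta(\det M)$ is generated by $\chi$ together with the image of $\pi$ (resp.\ $\pi'$).

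One point that needs care: in \eqref{10a} the map $\rho$ is the pullback of $(A,B)\mapsto MA-BM$, but for a specialized (non-generic) $M$ this need not be injective, so $\rho^{-1}(\J_M)$ must be read as the full preimage of $\J_M\subset\Mat_n(\O)$ under $\rho$, and $c_k$ is only well-defined modulo syzygies of the $\frac{\p M}{\p y_k}$ — however since we only care about the image of $\sum_k c_k\frac{\p}{\p y_k}$ in $\Ann_\Theta(\det M)$ and $\p(\det M)/\p y_k=\tr(M^\star\,\p M/\p y_k)$ depends only on $\sum_k c_k\,\p M/\p y_k\in\J_M$, this ambiguity is harmless. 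I would make this explicit so the diagram in the statement is unambiguous.

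The main obstacle I anticipate is purely the acyclicity step: one must verify that the Gulliksen--Neg{\aa}rd and J{\'o}zefiak complexes are \emph{universal} resolutions, i.e.\ that they are acyclic after any base change to a Cohen--Macaulay ring in which the submaximal-minors ideal has maximal codimension, and that truncating off the last term (the one mapping onto the ideal itself) still leaves a complex whose homology at the relevant spot computes $\syz(I_{n-1})$. This is classical (the complexes are ``generically perfect'' in the sense of Eagon--Northcott / Hochster), and follows from the Buchsbaum--Eisenbud exactness criterion applied over $\O$ using that $\operatorname{depth} I_{n-1}(M)=\operatorname{codim} I_{n-1}(M)=4$ (resp.\ $3$) in the Cohen--Macaulay ring $\O$; but pinning down the precise citation and checking that characteristic $\ne2$ (needed for the J{\'o}zefiak complex and for the $(2-\delta_{i,j})$ factors to be invertible) suffices is the part that demands the most attention. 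Everything else is bookkeeping with adjoint matrices and the chain rule, exactly as in the proof of Proposition~\ref{5}.
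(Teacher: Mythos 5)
Your proposal is correct and follows essentially the same route as the paper: exactness of the specialized Gulliksen--Neg{\aa}rd and J{\'o}zefiak complexes under the maximal-codimension hypothesis (which the paper simply cites from refs.~\cite{GN72,Joz78}, and which you justify in more detail via generic perfection and the acyclicity criterion), the chain-rule identity $\tr(M^\star\frac{\p M}{\p y_k})=\frac{\p\det M}{\p y_k}$ as in eq.~\eqref{16}, and eq.~\eqref{2} for the final claims. Your remark on the harmless non-uniqueness of the $c_k$ is a sensible clarification but does not change the argument.
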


\begin{proof}
Using eq.~\eqref{7}, the chain rule yields
\begin{align}\label{16}
\begin{split}
\tr\Big(M^\star\frac{\p M}{\p y_k}\Big)&=\sum_{i,j}\frac{\p m_{i,j}}{\p y_k}\frac{\p\det}{\p x_{i,j}}(M)=\frac{\p\det M}{\p y_k} \,,\\
\tr\Big(S^\star\frac{\p S}{\p y_k}\Big)&=\sum_{i\le j}\frac{\p s_{i,j}}{\p y_k}\frac{\p\det}{\p x_{i,j}}(S)=\frac{\p\det S}{\p y_k} \,.
\end{split}
\end{align}
By Gulliksen--Neg{\aa}rd \cite{GN72} and J{\'o}zefiak \cite{Joz78} respectively, the hypotheses imply that the complexes \eqref{8} are exact.
By eq.~\eqref{16} they induce exact sequences
\[
\xymatrix@R=0em{
\rho^{-1}(\J_M)\ar[r]^-\rho & \J_M\ar[r] & \J_{\det M}\ar[r] & 0 \,,\\
\rho'^{-1}(\J_S)\ar[r]^-{\rho'} & \J_S\ar[r] & \J_{\det S}\ar[r] & 0 \,.
}
\]
With eq.~\eqref{6} surjectivity of $\pi$ and $\pi'$ follows.
The particular claims are due to eq.~\eqref{2}.
\end{proof}


Finally we specialize to the case of interest in our context.

\begin{cor}\label{13}
Assume that $S\in\Sym_n(\O)$ has a block form
\[
S=(s_{i,j})=
\begin{pmatrix}
S_{1,1} & S_{1,2}\\
S_{2,1} & S_{2,2}
\end{pmatrix} \,,
\]
where $S_{1,1}$ is constant invertible and $s_{i,j}=x_{i,j}=y_{\sigma_{i,j}}$ for $i\le j$ with $(i,j)$ in block column $2$.
Then $\Der(-\log(\det S))$ is generated by all
\[
\pi'(A)=\sum_{k=1}^mc_k\frac{\p }{\p y_k} \,,
\]
where (cf.~Proposition~\ref{10}.\eqref{10b})
\begin{equation}\label{12}
\sum_{k=1}^mc_k\frac{\p S}{\p y_k}=SA+A^tS \,, \quad
A=
\begin{pmatrix}
0 & A_{1,2}\\
0 & A_{2,2}
\end{pmatrix}
\in\Mat_n(\O) \,.
\end{equation}
\end{cor}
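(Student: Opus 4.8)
The plan is to deduce the corollary from Proposition~\ref{10}.\eqref{10b} by verifying its two hypotheses in this block setting, and then to narrow the class of matrices $A$ needed to generate $\Der(-\log(\det S))$. First I would check the codimension condition: we need $I_{n-1}(S)$ to have the maximal codimension $3$. Here the key observation is that the variables $y_k$ are (up to the fixed reindexing $\sigma_{i,j}$) precisely the free entries $x_{i,j}$, $i\le j$, appearing in block column $2$, so that $S$ differs from the fully generic symmetric matrix only by having the constant invertible block $S_{1,1}$ frozen. Row/column operations using $S_{1,1}^{-1}$ reduce $S$ to a block-diagonal form $\operatorname{diag}(S_{1,1},\,S_{2,2}-S_{2,1}S_{1,1}^{-1}S_{1,2})$ whose Schur complement $\tilde S:=S_{2,2}-S_{2,1}S_{1,1}^{-1}S_{1,2}$ is again symmetric and, crucially, its off-diagonal-and-diagonal entries are independent affine-linear (in fact, after an invertible linear change, essentially generic) functions of the $y_k$. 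Since $\det S = \det S_{1,1}\cdot\det\tilde S$ with $\det S_{1,1}\in\KK^*$, the ideal $I_{n-1}(S)$ coincides (up to the unit $\det S_{1,1}$ and a linear isomorphism of the ambient space) with $I_{\ell-1}(\tilde S)$ for the generic symmetric $\ell\times\ell$ matrix $\tilde S$, $\ell = n - \dim S_{1,1}$; by J{\'o}zefiak this has codimension $3$ provided $\ell\ge 2$, which holds in our applications. Hence hypothesis~\eqref{10b} is met and Proposition~\ref{10}.\eqref{10b} gives that $\Der(-\log(\det S))$ is generated by the Euler field $\chi$ together with the image of $\pi'$ over \emph{all} $A\in\Mat_n(\O)$ with $SA+A^tS\in\J_S$.

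Next I would record that $\det S$ does admit an Euler vector field: $\det S$ is homogeneous of degree $\ell$ in the variables $y_k$ coming from block column $2$ (the entries of $S_{1,1}$ being constants), so $\chi=\sum_k y_k\,\p/\p y_k$ satisfies $\chi(\det S)=\ell\cdot\det S$, and $\operatorname{char}\KK\ne 2$ is irrelevant here as long as $\ell$ is a unit — more robustly, one can take $\chi$ to be the image under $\pi'$ of a suitable constant $A$, exactly as in the proof of Proposition~\ref{5} where $\pi'((\delta_{i,j}))=2\chi'$. So $\chi$ is itself in the image of $\pi'$, and therefore $\Der(-\log(\det S))$ is generated by $\pi'(A)$ as $A$ ranges over all of $\Mat_n(\O)$ with $SA+A^tS\in\J_S$, with no separate Euler generator needed.

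The remaining, and I expect the main, point is the reduction to $A$ of the special block-lower-triangular-in-columns form $A=\begin{pmatrix}0&A_{1,2}\\0&A_{2,2}\end{pmatrix}$, i.e.\ that the first block column of $A$ may be taken to vanish. The idea is that the first block column of $A$ contributes to $SA+A^tS$ only through the constant block $S_{1,1}$ and through $S_{2,1}$, and the corresponding variation of $S$ lies entirely in directions $\p/\p y_k$ that do not occur (block column $1$ of $S$ is constant), so it must be that the contribution of the first block column of $A$ to $SA+A^tS$, when required to lie in $\J_S\subset\Sym_n(\O)$ whose elements are supported on block column $2$, forces a compensating adjustment: concretely, given any admissible $A$, I would split $A=A'+A''$ with $A''$ supported on block column $1$ and show that $SA''+(A'')^tS$ lies in the $\O$-span of the constant matrices $\p S/\p y_k$ with $k$ "not present'', hence equals $0$, or can be absorbed, after possibly modifying $A'$ by an element of the kernel $\Sl_n$-type relations, into a matrix of the stated form without changing $\pi'(A)$. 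More precisely one uses that $S_{1,1}$ is invertible to solve for and eliminate $A_{1,1}, A_{2,1}$: writing out the four blocks of $SA+A^tS$, the $(1,1)$-block is $S_{1,1}A_{1,1}+A_{1,1}^tS_{1,1}+S_{1,2}A_{2,1}+A_{2,1}^tS_{2,1}$ and must equal the $(1,1)$-block of $\sum_k c_k\,\p S/\p y_k$, which is $0$; similarly the $(2,1)$-block must vanish; these are $\O$-linear equations that, because $S_{1,1}$ is an invertible \emph{constant}, can be solved to replace $(A_{1,1},A_{2,1})$ by $(0,0)$ at the cost of altering $A_{1,2},A_{2,2}$, while leaving the associated derivation $\sum_k c_k\,\p/\p y_k$ unchanged (the changes are exactly along the constant relations in the Gulliksen–Neg{\aa}rd/J{\'o}zefiak kernel, i.e.\ they do not affect the image in $\Mat_n(\O)/\rho(\cdots)$, equivalently the resulting $c_k$). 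Carrying out this elimination carefully — tracking that the modified $A_{1,2},A_{2,2}$ still yield the same $c_k$ and that $SA+A^tS$ still lies in $\J_S$ — is the technical heart of the argument; once done, Corollary~\ref{13} follows immediately from Proposition~\ref{10}.\eqref{10b}.
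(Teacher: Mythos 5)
Your overall strategy matches the paper's: establish codimension $3$ via the congruence reduction to a generic symmetric block (the paper invokes Micali--Villamayor and J{\'o}zefiak for exactly the Schur-complement argument you sketch), exhibit an Euler field inside the image of $\pi'$, and then eliminate the first block column of $A$ by modifying $A$ along the kernel of $A\mapsto SA+A^tS$. However, there is a genuine error in your Euler-field step. The polynomial $\det S$ is \emph{not} homogeneous in the variables $y_k$: already for $S=\left(\begin{smallmatrix}c & y_1\\ y_1 & y_2\end{smallmatrix}\right)$ with $c$ constant one gets $\det S=cy_2-y_1^2$, mixing degrees $1$ and $2$, so $\sum_k y_k\,\p/\p y_k$ is not an Euler field. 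Your fallback of taking $\chi=\pi'(A)$ for a "suitable constant $A$" is the right idea, but the example you cite, $A=(\delta_{i,j})$, is inadmissible here: $SA+A^tS=2S$ has $(1,1)$-block $2S_{1,1}\ne 0$ and hence does not lie in $\J_S$. The paper instead takes $A=(\delta_{i,n}\delta_{j,n})$ (a single $1$ in the corner of block column $2$), computes via Laplace expansion that $\pi'(A)(\det S)=2\det S$, and thereby gets the Euler field $\chi=\sum_i(1+\delta_{i,n})y_{\sigma_{i,n}}\p/\p y_{\sigma_{i,n}}$. Without an actual Euler field the reduction to Proposition~\ref{10}.\eqref{10b} does not go through, so this is a step you must repair, not just tighten.

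Two further points. First, your claim that the $(2,1)$-block of $SA+A^tS$ must vanish is wrong: since the variables $y_k$ include the entries of $S_{1,2}$, the module $\J_S$ consists of \emph{all} symmetric matrices with vanishing $(1,1)$-block, so only the $(1,1)$-block equation
$S_{1,1}A_{1,1}+S_{1,2}A_{2,1}+A_{1,1}^tS_{1,1}+A_{2,1}^tS_{2,1}=0$
constrains $A$; imposing the $(2,1)$-block condition would over-constrain. Second, the elimination of $(A_{1,1},A_{2,1})$ is left as a gesture toward "kernel relations"; the clean mechanism (used in the paper) is that for any skew-symmetric $W$ the replacement $A\mapsto A+WS$ leaves $SA+A^tS$ (hence the $c_k$) unchanged, and an explicit $W$ built from $S_{1,1}^{-1}$, $A_{1,1}$, $A_{2,1}$ --- whose $(1,1)$-block is skew precisely because of the displayed $(1,1)$-block equation --- annihilates the first block column of $A$. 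You should exhibit such a $W$ (or an equivalent explicit modification) rather than assert that the adjustment exists.
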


\begin{proof}
By Micali--Villamayor (see Lemma (1.1) of ref.~\cite{Joz78}), there is an invertible matrix $C$ such that
\[
C^tSC=
\begin{pmatrix}
S_0 & 0\\
0 & S'
\end{pmatrix} \,.
\]
The matrix $S_0$ is still constant invertible and $S'\equiv S_{2,2}$ modulo the variables $x_{i,j}$ with $(i,j)$ in block $(1,2)$.
The entries of $S'\in\Sym_{n'}(\O)$ are thus algebraically independent over the polynomial ring over $\KK$ in these variables.
By J{\'o}zefiak~(Thm.~(2.3) of ref.~\cite{Joz78}) it follows that $I_{n-1}(S)=I_{n'-1}(S')$ has codimension $3$.

For well-definedness of $\pi'$, it suffices to verify that $\pi'(A)\in\Der(-\log(\det S))$ if $A=(\delta_{i,k}\delta_{j,\ell})$.
In this case
\[
SA+A^tS=\sum_{i=1}^n(1+\delta_{i,\ell})y_{\sigma_{i,k}}\frac{\p S}{\p y_{\sigma_{i,\ell}}} \,,
\]
and hence, using eq.~\eqref{1} and Laplace expansion,
\begin{align*}
\pi'(A)(\det S)
&=\sum_{i=1}^n(1+\delta_{i,\ell})y_{\sigma_{i,k}}\frac{\p\det S}{\p y_{\sigma_{i,\ell}}}\\
&=\sum_{i=1}^n(1+\delta_{i,\ell})(2-\delta_{i,\ell})s_{i,k}s^\star_{i,\ell}
=2\delta_{k,\ell}\det S \,.
\end{align*}
Note that $\det S$ admits the Euler vector field
\[
\chi=\pi'((\delta_{i,n}\delta_{j,n}))=\sum_{i=1}^n(1+\delta_{i,n})y_{\sigma_{i,n}}\frac\p{\p y_{\sigma_{i,n}}}\,.
\]
So the hypotheses of Proposition~\ref{10}.\eqref{10b} are satisfied.

The module $\J_S$ consists of all symmetric matrices with $(1,1)$-block $0$.
Writing $A\in\Sl_n(\O)$ in block form
\[
A=
\begin{pmatrix}
A_{1,1} & A_{1,2}\\
A_{2,1} & A_{2,2}
\end{pmatrix} \,,
\]
eq.~\eqref{12} reduces to
\begin{equation}\label{11}
S_{1,1}A_{1,1}+S_{1,2}A_{2,1}+
A_{1,1}^tS_{1,1}+A_{2,1}^tS_{2,1}=0 \,.
\end{equation}
For any $W\in\Skw_n(\O)$, adding $WS$ to $A$ leaves $SA+A^tS$ invariant.
Using
\[
W=
\begin{pmatrix}
-A_{1,1}S_{1,1}^{-1}-S_{1,1}^{-1}A_{2,1}^tS_{2,1}S_{1,1}^{-1} & S_{1,1}^{-1}A_{2,1}^t\\[1.5mm]
-A_{2,1}S_{1,1}^{-1} & 0
\end{pmatrix}
\]
makes $A_{*,1}=0$ and turns eq.~\eqref{11} into $0=0$.
\end{proof}

Returning to the setup of Sec.~\ref{sec:Baikov_representation}, consider the matrix $S$ in eq.~\eqref{eq:extended_Gram_matrix} with the given block form.
Its submatrix $S_{1,1}$ is the Gram matrix $G$ in eq.~\eqref{eq:Gram_matrix_of_external_momenta} whose entries are the Mandelstam variables $\lambda_{i,j}$ in eq.~\eqref{eq:definition_of_lambda} which are treated as constants in the integration and IBP reduction.
As noted below eq.~\eqref{eq:definition_of_U}, $U=\det G$ is non-vanishing provided that $p_1, \ldots, p_E$ are linearly independent.
Let $\KK=\QQ(\lambda_{i,j})$ be the field of rational functions in the Mandelstam variables over the rational numbers.
Note that the characteristic of $\KK$ is 0, so that the above assumption on the characteristic is satisfied.%
\footnote{Note that while in the actual integration the variables corresponding to the external momenta may take real values and the remaining ones may take complex values, the IBP relations have a generating system defined over the rationals. Both the Laplace expansion and a syzygy module computation via Gr\"obner basis methods lead to such a generating system. This again simplifies the computation of solutions satisfying further constraints.}
Then $S_{1,1}$ is constant invertible and Corollary~\ref{13} applies.
As a result the $L(L+E)$ syzygies in eqs.~\eqref{eq:syzygy_generators_components}--\eqref{eq:syzygy_generators} generate all syzygies in eq.~\eqref{eq:syzygy_equation}.

\section{Examples}\label{sec:examples}

In this section we work out explicit expressions for
the syzygy generators presented in Sec.~\ref{sec:syzygies_from_Laplace_expansion}
for three diagrams.

\subsection{Fully massless planar double box}\label{sec:example_P_double_box}

As a simple example we consider the fully massless planar
double-box diagram shown in fig.~\ref{fig:massless_planar_DB_z_variables}.

\begin{figure}[!h]
\begin{center}
\includegraphics[angle=0, width=0.3\textwidth]{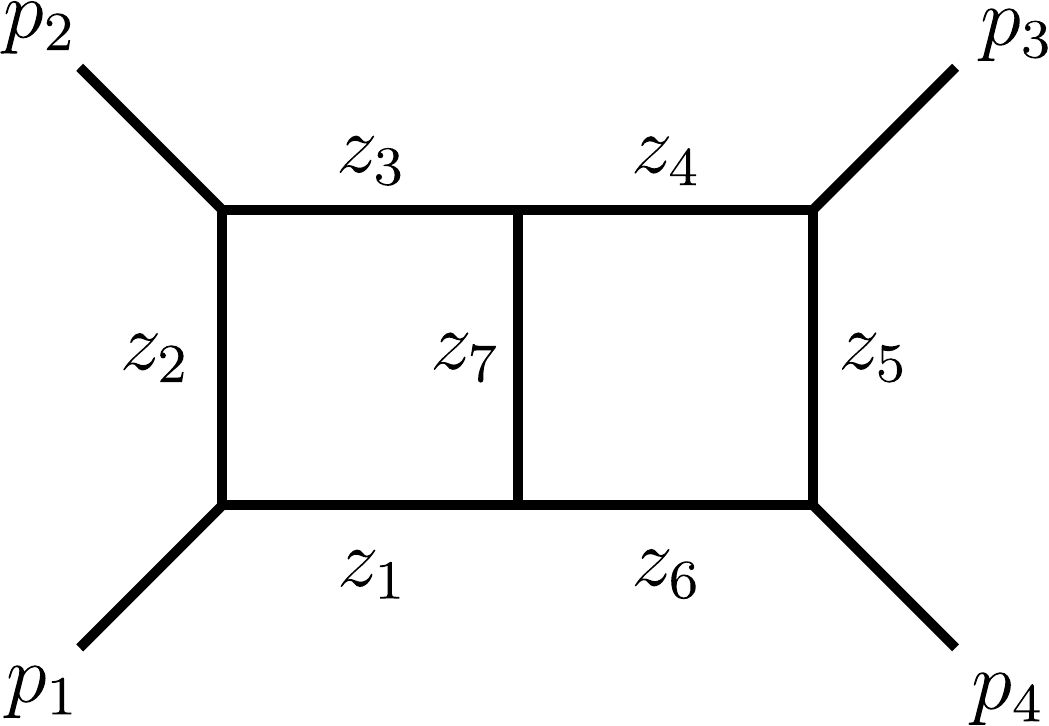}
{\vskip -0mm}
\caption{The fully massless planar double-box diagram.
All external momenta are taken to be outgoing.}
\label{fig:massless_planar_DB_z_variables}
\end{center}
\end{figure}

In this case, the combined number of propagators
and irreducible scalar products \eqref{eq:number_of_independent_loop_momentum_dot_products}
is $m = 9$. In agreement with eq.~\eqref{eq:definition_of_z},
we define the $z$-variables as follows, setting
$P_{1,2} \equiv p_1 + p_2$,
\begin{equation}
\begin{alignedat}{3}
z_1&=\ell_1^2\,,           \hspace{3mm}  && z_2 = (\ell_1 - p_1)^2\,,  \hspace{3mm} && z_3 = (\ell_1 - P_{1,2})^2\,, \\
z_4&=(\ell_2+P_{1,2})^2\,,  \hspace{3mm}  && z_5 = (\ell_2 - p_4)^2\,,  \hspace{3mm} && z_6 = \ell_2^2\,, \\
z_7&=(\ell_1+\ell_2)^2\,,  \hspace{3mm}  && z_8 = (\ell_1 + p_4)^2\,,  \hspace{3mm} && z_9 = (\ell_2 + p_1)^2\,.
\end{alignedat}
\end{equation}
We choose as the set \eqref{eq:list_of_momenta} of
all independent external and loop momenta $V = (p_1, p_2, p_3, \ell_1, \ell_2)$.
The lexicographically-ordered set of elements $(i,j)$
in eq.~\eqref{eq:independent_loop_containing_entries} then becomes,
\begin{align}
(x_1, \ldots, x_9) &\equiv
(x_{1,4}, \hspace{0.7mm} x_{1,5}, \hspace{0.7mm} x_{2,4},
\hspace{0.7mm} x_{2,5}, \hspace{0.7mm} x_{3,4}, \hspace{0.7mm} x_{3,5}, \nonumber \\
& \hspace{20mm} x_{4,4}, \hspace{0.7mm} x_{4,5}, \hspace{0.7mm} x_{5,5}) \,,
\label{eq:list_of_x_vars_massless_db}
\end{align}
and it immediately follows that the matrices in
eq.~\eqref{eq:relation_of_z_to_x} are given by,
\begin{equation}
A = \begin{pmatrix}
 0 & 0 & 0 & 0 & 0 & 0 & 1 & 0 & 0 \\
 -2 & 0 & 0 & 0 & 0 & 0 & 1 & 0 & 0 \\
 -2 & 0 & -2 & 0 & 0 & 0 & 1 & 0 & 0 \\
 0 & 2 & 0 & 2 & 0 & 0 & 0 & 0 & 1 \\
 0 & 2 & 0 & 2 & 0 & 2 & 0 & 0 & 1 \\
 0 & 0 & 0 & 0 & 0 & 0 & 0 & 0 & 1 \\
 0 & 0 & 0 & 0 & 0 & 0 & 1 & 2 & 1 \\
 -2 & 0 & -2 & 0 & -2 & 0 & 1 & 0 & 0 \\
 0 & 2 & 0 & 0 & 0 & 0 & 0 & 0 & 1 \\
\end{pmatrix} \,,
\label{eq:A-matrix_massless_db}
\end{equation}
and, for $\alpha = 1, \ldots, 9$,
\begin{equation}
B_\alpha = 0 \hspace{3mm} \mathrm{for} \hspace{2mm} \alpha \notin \{ 3,4 \}
\hspace{3mm} \mathrm{and} \hspace{3mm}
B_3 = B_4 = \begin{pmatrix} 0 & 1 & 0 \\ 1 & 0 & 0 \\ 0 & 0 & 0 \end{pmatrix} \,,
\label{eq:B-matrices_massless_db}
\end{equation}
and $m_\alpha = 0$.
We can now use eqs.~\eqref{eq:relation_of_z_to_x}--\eqref{eq:definition_of_z}
to express the syzygy generators in
eqs.~\eqref{eq:syzygy_generators_components}--\eqref{eq:syzygy_generators}
in terms of the $z_\alpha$, yielding,
\begin{align}
t_{4,1} &= (z_1{-}z_2, \hspace{0.4mm} z_1{-}z_2, \hspace{0.4mm} {-}s{+}z_1{-}z_2, \hspace{0.4mm} 0, \hspace{0.4mm} 0, \hspace{0.4mm} 0, \hspace{0.4mm} \nonumber \\
& \pushright{ z_1{-}z_2{-}z_6{+}z_9, \hspace{0.4mm} t{+}z_1{-}z_2, \hspace{0.4mm} 0, \hspace{0.4mm} 0) \,, } \nonumber \\[1.5mm]
t_{4,2} &= (s{+}z_2{-}z_3, \hspace{0.4mm} z_2{-}z_3, \hspace{0.4mm} z_2{-}z_3, \hspace{0.4mm} 0, \hspace{0.4mm} 0, \hspace{0.4mm} 0, \hspace{0.4mm} \nonumber \\
& \pushright{ z_2{-}z_3{+}z_4{-}z_9, \hspace{0.4mm} {-}t{+}z_2{-}z_3, \hspace{0.4mm} 0, \hspace{0.4mm} 0) \,, } \nonumber \\[1.5mm]
t_{4,3} &= ({-}s{+}z_3{-}z_8, \hspace{0.4mm} t{+}z_3{-}z_8, \hspace{0.4mm} z_3{-}z_8, \hspace{0.4mm} 0, \hspace{0.4mm} 0, \hspace{0.4mm} 0, \hspace{0.4mm} \nonumber \\
& \pushright{ z_3{-}z_4{+}z_5{-}z_8, \hspace{0.4mm} z_3{-}z_8, \hspace{0.4mm} 0, \hspace{0.4mm} 0) \,, } \nonumber \\[1.5mm]
t_{4,4} &= (2 z_1, \hspace{0.4mm} z_1{+}z_2, \hspace{0.4mm} {-}s{+}z_1{+}z_3, \hspace{0.4mm} 0, \hspace{0.4mm} 0, \hspace{0.4mm} 0, \hspace{0.4mm} \nonumber \\
& \pushright{ z_1{-}z_6{+}z_7, \hspace{0.4mm} z_1{+}z_8, \hspace{0.4mm} 0, \hspace{0.4mm} {-}2) \,, } \nonumber \\[1.5mm]
t_{4,5} &= ({-}z_1{-}z_6{+}z_7, \hspace{0.4mm} {-}z_1{+}z_7{-}z_9, \hspace{0.4mm} s{-}z_1{-}z_4{+}z_7, \nonumber \\
       &\hspace{10mm} 0, \hspace{0.4mm} 0, \hspace{0.4mm} 0, {-}z_1{+}z_6{+}z_7, \hspace{0.4mm} {-}z_1{-}z_5{+}z_7, \hspace{0.4mm} 0, \hspace{0.4mm} 0) \,, \label{eq:syzygies_massless_planar_db} \\[1.5mm]
t_{5,1} &= (0, \hspace{0.4mm} 0, \hspace{0.4mm} 0, \hspace{0.4mm} s{-}z_6{+}z_9, \hspace{0.4mm} {-}t{-}z_6{+}z_9, \hspace{0.4mm} z_9{-}z_6, \nonumber \\
& \pushright{ z_1{-}z_2{-}z_6{+}z_9, \hspace{0.4mm} 0, \hspace{0.4mm} z_9{-}z_6, \hspace{0.4mm} 0) \,, } \nonumber \\[1.5mm]
t_{5,2} &= (0, \hspace{0.4mm} 0, \hspace{0.4mm} 0, \hspace{0.4mm} z_4{-}z_9, \hspace{0.4mm} t{+}z_4{-}z_9, \hspace{0.4mm} {-}s{+}z_4{-}z_9, \nonumber \\
& \pushright{ z_2{-}z_3{+}z_4{-}z_9, \hspace{0.4mm} 0, \hspace{0.4mm} z_4{-}z_9, \hspace{0.4mm} 0) \,, } \nonumber \\[1.5mm]
t_{5,3} &= (0, \hspace{0.4mm} 0, \hspace{0.4mm} 0, \hspace{0.4mm} z_5{-}z_4, \hspace{0.4mm} z_5{-}z_4, \hspace{0.4mm} s{-}z_4{+}z_5, \nonumber \\
& \pushright{ z_3{-}z_4{+}z_5{-}z_8, \hspace{0.4mm} 0, \hspace{0.4mm} {-}t{-}z_4{+}z_5, \hspace{0.4mm} 0) \,, } \nonumber \\[1.5mm]
t_{5,4} &= (0, \hspace{0.4mm} 0, \hspace{0.4mm} 0, \hspace{0.4mm} s{-}z_3{-}z_6{+}z_7, \hspace{0.4mm} {-}z_6{+}z_7{-}z_8, \hspace{0.4mm} {-}z_1{-}z_6{+}z_7, \nonumber \\
& \pushright{ z_1{-}z_6{+}z_7, \hspace{0.4mm} 0, \hspace{0.4mm} {-}z_2{-}z_6{+}z_7, \hspace{0.4mm} 0) \,, } \nonumber \\[1.5mm]
t_{5,5} &= (0, \hspace{0.4mm} 0, \hspace{0.4mm} 0, \hspace{0.4mm} {-}s{+}z_4{+}z_6, \hspace{0.4mm} z_5{+}z_6, \hspace{0.4mm} 2 z_6, \nonumber \\
& \pushright{ {-}z_1{+}z_6{+}z_7, \hspace{0.4mm} 0, \hspace{0.4mm} z_6{+}z_9, \hspace{0.4mm} {-}2) \,. } \nonumber
\end{align}

\noindent Syzygies obtained from S-polynomial-based computations are not
guaranteed to be of degree one. For example, from the {\sc Singular} command {\tt syz}
one can obtain a representation with 13 generators of up to cubic degree. More specifically,
{\tt syz} produces 10 generators of degree one, two generators of degree two,
and one generator of degree three.

Expressions for on-shell syzygies are too lengthy to record here,
but we give a few examples: on the cut $\mathcal{S}_\mathrm{cut} = \{1,4,7\}$
one can find a representation of $\widehat{\mathcal{T}} \cap \mathcal{Z}$ with
18 generators of up to cubic degree, and on the cut $\mathcal{S}_\mathrm{cut} = \{2,5,7\}$
a representation of $\widehat{\mathcal{T}} \cap \mathcal{Z}$ with
20 generators of up to cubic degree.

\subsection{Planar double box with internal mass}\label{sec:example_P_double_box_with_internal_mass}

As a more non-trivial example we consider a planar
double-box diagram with propagators of equal mass
as shown in fig.~\ref{fig:massive_planar_DB_z_variables}.

\begin{figure}[!h]
\begin{center}
\includegraphics[angle=0, width=0.3\textwidth]{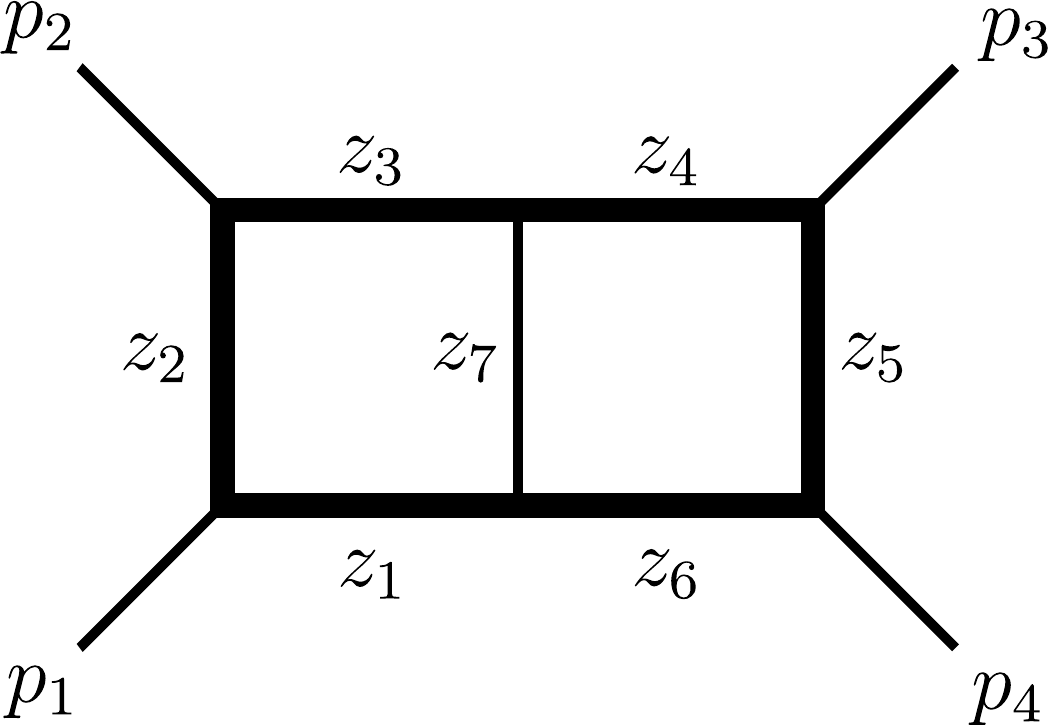}
{\vskip -0mm}
\caption{Planar double-box diagram. The bold lines represent massive
propagators with the same mass $M$. All external momenta are taken to be outgoing.}
\label{fig:massive_planar_DB_z_variables}
\end{center}
\end{figure}

As in the massless case in Sec.~\ref{sec:example_P_double_box}, the combined number of propagators
and irreducible scalar products \eqref{eq:number_of_independent_loop_momentum_dot_products}
is $m = 9$. In analogy with eq.~\eqref{eq:definition_of_z},
we define the $z$-variables as follows, setting
$P_{1,2} \equiv p_1 + p_2$,
\begin{equation}
\begin{alignedat}{3}
z_1&=\ell_1^2 - M^2\,,               \hspace{3mm}  &&  z_2 = (\ell_1 - p_1)^2 - M^2\,,    \\
z_3&= (\ell_1 - P_{1,2})^2  - M^2\,, \hspace{3mm}  &&  z_4 = (\ell_2+P_{1,2})^2  - M^2\,, \\
z_5&= (\ell_2 - p_4)^2 - M^2\,,      \hspace{3mm}  &&  z_6 = \ell_2^2  - M^2 \,,          \\
z_7&=(\ell_1+\ell_2)^2\,,            \hspace{3mm}  &&  z_8 = (\ell_1 + p_4)^2\,,          \\
z_9&= (\ell_2 + p_1)^2\,.
\end{alignedat}
\end{equation}
Again we choose as the set \eqref{eq:list_of_momenta} of
all independent external and loop momenta $V = (p_1, p_2, p_3, \ell_1, \ell_2)$.
The lexicographically-ordered set of elements $x_{i,j}$
in eq.~\eqref{eq:independent_loop_containing_entries}
is again that in eq.~\eqref{eq:list_of_x_vars_massless_db}, and
the matrices in eq.~\eqref{eq:relation_of_z_to_x} are given by
eqs.~\eqref{eq:A-matrix_massless_db} and \eqref{eq:B-matrices_massless_db},
whereas $m_\alpha$ in eq.~\eqref{eq:relation_of_z_to_x} is given by
$m_\alpha = M$ for $1 \leq \alpha \leq 6$ and $m_\beta = 0$
for $7 \leq \beta \leq 9$.

Using eqs.~\eqref{eq:relation_of_z_to_x}--\eqref{eq:definition_of_z}
to express the syzygy generators in
eqs.~\eqref{eq:syzygy_generators_components}--\eqref{eq:syzygy_generators}
in terms of the $z_\alpha$, we find in the case at hand,
\begin{align}
t_{4,1} &= (z_1 \hspace{-0.4mm} - \hspace{-0.4mm} z_2, \hspace{0.5mm} z_1 \hspace{-0.4mm} - \hspace{-0.4mm} z_2, \hspace{0.5mm} \hspace{-0.4mm} - \hspace{-0.4mm} s \hspace{-0.4mm} + \hspace{-0.4mm} z_1 \hspace{-0.4mm} - \hspace{-0.4mm} z_2, \hspace{0.5mm} 0, \hspace{0.5mm} 0, \hspace{0.5mm} 0, \nonumber \\
& \pushright{ \hspace{-0.4mm} z_1 \hspace{-0.4mm} - \hspace{-0.4mm} z_2 \hspace{-0.4mm} - \hspace{-0.4mm} z_6 \hspace{-0.4mm} + \hspace{-0.4mm} z_9 \hspace{-0.4mm} - \hspace{-0.4mm} M^2, \hspace{0.5mm} t \hspace{-0.4mm} + \hspace{-0.4mm} z_1 \hspace{-0.4mm} - \hspace{-0.4mm} z_2, \hspace{0.5mm} 0, \hspace{0.5mm} 0) \,, \nonumber } \\[1.5mm]
t_{4,2} &= (s \hspace{-0.4mm} + \hspace{-0.4mm} z_2 \hspace{-0.4mm} - \hspace{-0.4mm} z_3, \hspace{0.5mm} z_2 \hspace{-0.4mm} - \hspace{-0.4mm} z_3, \hspace{0.5mm} z_2 \hspace{-0.4mm} - \hspace{-0.4mm} z_3, \hspace{0.5mm} 0, \hspace{0.5mm} 0, \hspace{0.5mm} 0, \hspace{0.5mm} \nonumber \\
& \pushright{ z_2 \hspace{-0.4mm} - \hspace{-0.4mm} z_3 \hspace{-0.4mm} + \hspace{-0.4mm} z_4 \hspace{-0.4mm} - \hspace{-0.4mm} z_9 \hspace{-0.4mm} + \hspace{-0.4mm} M^2, \hspace{0.5mm} \hspace{-0.4mm} - \hspace{-0.4mm} t \hspace{-0.4mm} + \hspace{-0.4mm} z_2 \hspace{-0.4mm} - \hspace{-0.4mm} z_3, \hspace{0.5mm} 0, \hspace{0.5mm} 0) \,,} \nonumber \\[1.5mm]
t_{4,3} &= (- \hspace{-0.4mm} s \hspace{-0.4mm} + \hspace{-0.4mm} z_3 \hspace{-0.4mm} - \hspace{-0.4mm} z_8 \hspace{-0.4mm}+\hspace{-0.4mm} M^2, \hspace{0.5mm} t \hspace{-0.4mm} + \hspace{-0.4mm} z_3 \hspace{-0.4mm} - \hspace{-0.4mm} z_8 \hspace{-0.4mm}+\hspace{-0.4mm} M^2, \hspace{0.5mm} z_3 \hspace{-0.4mm} - \hspace{-0.4mm} z_8 \hspace{-0.4mm}+\hspace{-0.4mm} M^2, \nonumber \\
& \pushright{ \hspace{0.5mm} 0, \hspace{0.5mm} 0, \hspace{0.5mm} 0, \hspace{0.5mm} z_3 \hspace{-0.4mm} - \hspace{-0.4mm} z_4 \hspace{-0.4mm} + \hspace{-0.4mm} z_5 \hspace{-0.4mm} - \hspace{-0.4mm} z_8 \hspace{-0.4mm}+\hspace{-0.4mm} M^2, \hspace{0.5mm} z_3 \hspace{-0.4mm} - \hspace{-0.4mm} z_8 \hspace{-0.4mm} + \hspace{-0.4mm} M^2, \hspace{0.5mm} 0, \hspace{0.5mm} 0) \,,} \nonumber \\[1.5mm]
t_{4,4} &= (2z_1 \hspace{-0.4mm} + \hspace{-0.4mm} 2M^2, \hspace{0.5mm} z_1 \hspace{-0.4mm} + \hspace{-0.4mm} z_2 + \hspace{-0.4mm} 2 M^2 \hspace{-0.4mm}, -\hspace{-0.4mm} s \hspace{-0.4mm} + \hspace{-0.4mm} z_1 \hspace{-0.4mm} + \hspace{-0.4mm} z_3 \hspace{-0.4mm} + \hspace{-0.4mm} 2M^2, \nonumber \\
& \pushright{ \hspace{0.5mm} 0, \hspace{0.5mm} 0, \hspace{0.5mm} 0, \hspace{0.5mm} z_1 \hspace{-0.4mm} - \hspace{-0.4mm} z_6 \hspace{-0.4mm} + \hspace{-0.4mm} z_7, z_1 \hspace{-0.4mm} + \hspace{-0.4mm} z_8 \hspace{-0.4mm} + \hspace{-0.4mm} M^2, \hspace{0.5mm} 0, \hspace{0.5mm} \hspace{-0.4mm} - \hspace{-0.4mm} 2) \,,} \nonumber \\[1.5mm]
t_{4,5} &= (-\hspace{-0.4mm} z_1 \hspace{-0.4mm} - \hspace{-0.4mm} z_6 \hspace{-0.4mm} + \hspace{-0.4mm} z_7 \hspace{-0.4mm} - \hspace{-0.4mm} 2 M^2, \hspace{0.5mm} \hspace{-0.4mm} - \hspace{-0.4mm} z_1 \hspace{-0.4mm} + \hspace{-0.4mm} z_7 \hspace{-0.4mm} - \hspace{-0.4mm} z_9 \hspace{-0.4mm} - \hspace{-0.4mm} M^2, \nonumber \\
& \hspace{5mm} s \hspace{-0.4mm} - \hspace{-0.4mm} z_1 \hspace{-0.4mm} - \hspace{-0.4mm} z_4 \hspace{-0.4mm} + \hspace{-0.4mm} z_7 - \hspace{-0.4mm} 2 M^2, \hspace{0.5mm} 0, \hspace{0.5mm} 0, \hspace{0.5mm} 0, \hspace{0.5mm}  \hspace{-0.4mm} - \hspace{-0.4mm} z_1 \hspace{-0.4mm} + \hspace{-0.4mm} z_6 \hspace{-0.4mm} + \hspace{-0.4mm} z_7, \nonumber \\
& \hspace{20mm} - \hspace{-0.4mm} z_1 \hspace{-0.4mm} - \hspace{-0.4mm} z_5 \hspace{-0.4mm} + \hspace{-0.4mm} z_7 - \hspace{-0.4mm} 2 M^2 \hspace{-0.4mm}, \hspace{0.5mm} 0, \hspace{0.5mm} 0) \,, \\[1.5mm]
t_{5,1} &= (0, \hspace{0.5mm} 0, \hspace{0.5mm} 0, \hspace{0.5mm} \hspace{-0.4mm} s \hspace{-0.4mm} - \hspace{-0.4mm} z_6 \hspace{-0.4mm} + \hspace{-0.4mm} z_9
\hspace{-0.4mm} - \hspace{-0.4mm} M^2, \hspace{0.5mm} \hspace{-0.4mm} - \hspace{-0.4mm} t \hspace{-0.4mm} - \hspace{-0.4mm} z_6 \hspace{-0.4mm} + \hspace{-0.4mm} z_9 \hspace{-0.4mm} - \hspace{-0.4mm} M^2, \nonumber \\
& \pushright{\hspace{-4mm} z_9 \hspace{-0.4mm} - \hspace{-0.4mm} z_6 \hspace{-0.4mm} - \hspace{-0.4mm} M^2, \hspace{0.5mm} z_1 \hspace{-0.4mm} - \hspace{-0.4mm} z_2 \hspace{-0.4mm} - \hspace{-0.4mm} z_6 \hspace{-0.4mm} + \hspace{-0.4mm} z_9 \hspace{-0.4mm} - \hspace{-0.4mm} M^2, \hspace{0.5mm} 0, z_9 - \hspace{-0.4mm} z_6 \hspace{-0.4mm} - \hspace{-0.4mm} M^2, \hspace{0.5mm} 0) \,,} \nonumber \\[1.5mm]
t_{5,2} &= (0, \hspace{0.5mm} 0, \hspace{0.5mm} 0, \hspace{0.5mm} z_4 \hspace{-0.4mm} - \hspace{-0.4mm} z_9  \hspace{-0.4mm}+\hspace{-0.4mm} M^2, \hspace{0.5mm} \hspace{-0.4mm} t \hspace{-0.4mm} + \hspace{-0.4mm} z_4 \hspace{-0.4mm} - \hspace{-0.4mm} z_9 \hspace{-0.4mm}+\hspace{-0.4mm} M^2, \nonumber \\
& \pushright{\hspace{-4mm} - \hspace{-0.4mm} s \hspace{-0.4mm} + \hspace{-0.4mm} z_4 \hspace{-0.4mm} - \hspace{-0.4mm} z_9 \hspace{-0.4mm}+\hspace{-0.4mm} M^2,
\hspace{0.5mm} \hspace{-0.4mm} z_2 \hspace{-0.4mm} - \hspace{-0.4mm} z_3 \hspace{-0.4mm} + \hspace{-0.4mm} z_4 \hspace{-0.4mm} - \hspace{-0.4mm} z_9
\hspace{-0.4mm}+\hspace{-0.4mm} M^2, \hspace{0.5mm} 0,} \nonumber \\
& \pushright{ \hspace{0.5mm} z_4 \hspace{-0.4mm} - \hspace{-0.4mm} z_9 \hspace{-0.4mm}+\hspace{-0.4mm} M^2, \hspace{0.5mm} 0) \,,} \nonumber \\[1.5mm]
t_{5,3} &= (0, \hspace{0.5mm} 0, \hspace{0.5mm} 0, \hspace{0.5mm} z_5 \hspace{-0.4mm} - \hspace{-0.4mm} z_4, \hspace{0.5mm} z_5 \hspace{-0.4mm} - \hspace{-0.4mm} z_4, \hspace{0.5mm} s \hspace{-0.4mm} - \hspace{-0.4mm} z_4 \hspace{-0.4mm} + \hspace{-0.4mm} z_5, \nonumber \\
& \pushright{\hspace{0.5mm} z_3 \hspace{-0.4mm} - \hspace{-0.4mm} z_4 \hspace{-0.4mm} + \hspace{-0.4mm} z_5 \hspace{-0.4mm} - \hspace{-0.4mm} z_8 \hspace{-0.4mm} + \hspace{-0.4mm} M^2, \hspace{0.5mm} 0, \hspace{0.5mm}  \hspace{-0.4mm} - \hspace{-0.4mm} t \hspace{-0.4mm} - \hspace{-0.4mm} z_4 \hspace{-0.4mm} + \hspace{-0.4mm} z_5, \hspace{0.5mm} 0) \,,} \nonumber \\[1.5mm]
t_{5,4} &= (0, \hspace{0.5mm} 0, \hspace{0.5mm} 0, \hspace{0.5mm} s \hspace{-0.4mm} - \hspace{-0.4mm} z_3 \hspace{-0.4mm} - \hspace{-0.4mm} z_6 \hspace{-0.4mm} + \hspace{-0.4mm} z_7
\hspace{-0.4mm} - \hspace{-0.4mm} 2 M^2, \hspace{0.5mm} \hspace{-0.4mm} - \hspace{-0.4mm} z_6 \hspace{-0.4mm} + \hspace{-0.4mm} z_7 \hspace{-0.4mm} - \hspace{-0.4mm} z_8
\hspace{-0.4mm} - \hspace{-0.4mm} M^2, \nonumber \\
& \pushright{\hspace{-4mm} \hspace{-0.4mm} z_7 - \hspace{-0.4mm} z_1 \hspace{-0.4mm} - \hspace{-0.4mm} z_6 \hspace{-0.4mm} - \hspace{-0.4mm} 2 M^2, \hspace{0.5mm} \hspace{-0.4mm} z_1 \hspace{-0.4mm} - \hspace{-0.4mm} z_6 \hspace{-0.4mm} + \hspace{-0.4mm} z_7,  \hspace{0.5mm} 0,\hspace{0.5mm} \hspace{-0.4mm} z_7 \hspace{-0.4mm} - \hspace{-0.4mm} z_2 \hspace{-0.4mm} - \hspace{-0.4mm} z_6 \hspace{-0.4mm} - \hspace{-0.4mm} 2 M^2, \hspace{0.5mm} 0) \,,} \nonumber \\[1.5mm]
t_{5,5} &= (0, \hspace{0.5mm} 0, \hspace{0.5mm} 0, \hspace{0.5mm} \hspace{-0.4mm} - \hspace{-0.4mm} s \hspace{-0.4mm} + \hspace{-0.4mm} z_4 \hspace{-0.4mm} + \hspace{-0.4mm} z_6 \hspace{-0.4mm} + \hspace{-0.4mm} 2M^2, \hspace{0.5mm} z_5 \hspace{-0.4mm} + \hspace{-0.4mm} z_6 \hspace{-0.4mm} + \hspace{-0.4mm} 2M^2, \nonumber \\
& \pushright{\hspace{0.5mm} 2z_6 \hspace{-0.4mm}+\hspace{-0.4mm} 2M^2, \hspace{0.5mm} \hspace{-0.4mm} - \hspace{-0.4mm} z_1 + \hspace{-0.4mm} z_6 \hspace{-0.4mm} + \hspace{-0.4mm} z_7, \hspace{0.5mm} 0, \hspace{0.5mm} z_6 \hspace{-0.4mm} + \hspace{-0.4mm} z_9 \hspace{-0.4mm} + \hspace{-0.4mm} M^2, \hspace{0.5mm}  \hspace{-0.4mm} - \hspace{-0.4mm} 2) \,,} \nonumber
\end{align}
which agrees with eq.~(\ref{eq:syzygies_massless_planar_db}) in the case $M=0$.

\subsection{Fully massless non-planar double pentagon}\label{sec:example_NP_double_pentagon}

As a yet more non-trivial example we consider the fully massless non-planar
double-pentagon diagram shown in fig.~\ref{fig:massless_non-planar_DP_z_variables}.

\begin{figure}[!h]
\begin{center}
\includegraphics[angle=0, width=0.3\textwidth]{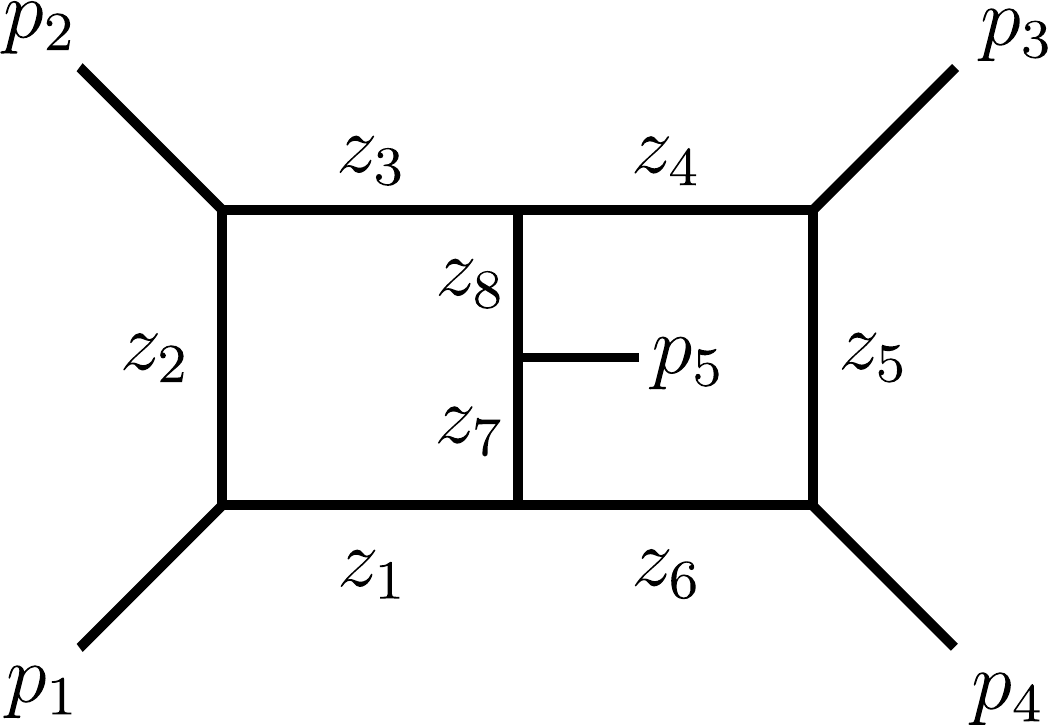}
{\vskip -0mm}
\caption{The fully massless non-planar double-pentagon diagram.
All external momenta are taken to be outgoing.}
\label{fig:massless_non-planar_DP_z_variables}
\end{center}
\end{figure}

In this case, the combined number of propagators
and irreducible scalar products \eqref{eq:number_of_independent_loop_momentum_dot_products}
is $m = 11$. In agreement with eq.~\eqref{eq:definition_of_z},
we define the $z$-variables as follows, setting
$P_{i,j} \equiv p_i + p_j$,
\begin{equation}
\begin{alignedat}{3}
   z_1&=\ell_1^2\,,           \hspace{2mm}  &&    z_2 = (\ell_1{-}p_1)^2\,,           \hspace{2mm} && z_3 = (\ell_1{-}P_{1,2})^2\,, \\
   z_4&=(\ell_2{-}P_{3,4})^2\,, \hspace{2mm}  &&    z_5 = (\ell_2{-}p_4)^2\,,           \hspace{2mm} && z_6 = \ell_2^2\,, \\
   z_7&=(\ell_1{+}\ell_2)^2\,,  \hspace{2mm}  &&    z_8 = (\ell_1{+}\ell_2{+}p_5)^2\,,  \hspace{2mm} && z_9 = (\ell_1{+}p_3)^2\,, \\
z_{10}&=(\ell_1{+}p_4)^2\,,     \hspace{2mm}  && z_{11} = (\ell_2{+}p_1)^2\,.           \hspace{2mm} &&
\end{alignedat}
\end{equation}
We choose as the set \eqref{eq:list_of_momenta} of
all independent external and loop momenta $V = (p_1, p_2, p_3, p_4, \ell_1, \ell_2)$.
The lexicographically-ordered set of elements $(i,j)$
in eq.~\eqref{eq:independent_loop_containing_entries} then becomes,
\begin{align}
(x_1, \ldots, x_{11}) &\equiv
(x_{1,5}, \hspace{0.7mm} x_{1,6}, \hspace{0.7mm} x_{2,5},
\hspace{0.7mm} x_{2,6}, \hspace{0.7mm} x_{3,5}, \hspace{0.7mm} x_{3,6}, \nonumber \\
& \hspace{14.5mm} x_{4,5}, \hspace{0.7mm} x_{4,6}, \hspace{0.7mm} x_{5,5}, \hspace{0.7mm} x_{5,6}, \hspace{0.7mm} x_{6,6}) \,,
\end{align}
and it immediately follows that the matrices in
eq.~\eqref{eq:relation_of_z_to_x} are given by,
\begin{equation}
A=\left(
\begin{array}{ccccccccccc}
 0 & 0 & 0 & 0 & 0 & 0 & 0 & 0 & 1 & 0 & 0 \\
 -2 & 0 & 0 & 0 & 0 & 0 & 0 & 0 & 1 & 0 & 0 \\
 -2 & 0 & -2 & 0 & 0 & 0 & 0 & 0 & 1 & 0 & 0 \\
 0 & 0 & 0 & 0 & 0 & -2 & 0 & -2 & 0 & 0 & 1 \\
 0 & 0 & 0 & 0 & 0 & 0 & 0 & -2 & 0 & 0 & 1 \\
 0 & 0 & 0 & 0 & 0 & 0 & 0 & 0 & 0 & 0 & 1 \\
 0 & 0 & 0 & 0 & 0 & 0 & 0 & 0 & 1 & 2 & 1 \\
 -2 & -2 & -2 & -2 & -2 & -2 & -2 & -2 & 1 & 2 & 1 \\
 0 & 0 & 0 & 0 & 2 & 0 & 0 & 0 & 1 & 0 & 0 \\
 0 & 0 & 0 & 0 & 0 & 0 & 2 & 0 & 1 & 0 & 0 \\
 0 & 2 & 0 & 0 & 0 & 0 & 0 & 0 & 0 & 0 & 1 \\
\end{array}
\right) \,,
\end{equation}
and, for $\alpha = 1, \ldots, 11$,
\begin{align}
&B_\alpha = 0 \hspace{3mm} \mathrm{for} \hspace{2mm} \alpha \notin \{ 3,4 \}
\hspace{3mm} \mathrm{and} \hspace{3mm} \\[1mm]
&
B_3 = \begin{pmatrix} 0 & 1 & 0 & 0 \\ 1 & 0 & 0 & 0 \\ 0 & 0 & 0 & 0 \\ 0 & 0 & 0 & 0 \end{pmatrix} \,, \hspace{4mm}
B_4 = \begin{pmatrix} 0 & 0 & 0 & 0 \\ 0 & 0 & 0 & 0 \\ 0 & 0 & 0 & 1 \\ 0 & 0 & 1 & 0 \end{pmatrix} \,,
\end{align}
and $m_\alpha = 0$.
We can now use eqs.~\eqref{eq:relation_of_z_to_x}--\eqref{eq:definition_of_z}
to express the syzygy generators in
eqs.~\eqref{eq:syzygy_generators_components}--\eqref{eq:syzygy_generators}
in terms of the $z_\alpha$, yielding,
\begin{align}
t_{5, 1} &= (z_{1}\hspace{-0.6mm}-\hspace{-0.6mm}z_{2}, \hspace{0.5mm} z_{1}\hspace{-0.6mm}-\hspace{-0.6mm}z_{2}, \hspace{0.5mm} \hspace{-0.6mm}-\hspace{-0.6mm}s_{1,2}\hspace{-0.6mm}+\hspace{-0.6mm}z_{1}\hspace{-0.6mm}-\hspace{-0.6mm}z_{2}, \hspace{0.5mm} 0, \hspace{0.5mm} 0, \hspace{0.5mm} 0, \hspace{0.5mm} \nonumber \\
& \hspace{-3mm} z_{1}\hspace{-0.6mm}-\hspace{-0.6mm}z_{2}\hspace{-0.6mm}-\hspace{-0.6mm}z_{6}\hspace{-0.6mm}+\hspace{-0.6mm}z_{11}, \hspace{0.5mm} \hspace{-0.6mm}-\hspace{-0.6mm}s_{1,2}\hspace{-0.6mm}-\hspace{-0.6mm}s_{1,3}\hspace{-0.6mm}-\hspace{-0.6mm}s_{1,4}\hspace{-0.6mm}+\hspace{-0.6mm}z_{1}\hspace{-0.6mm}-\hspace{-0.6mm}z_{2}\hspace{-0.6mm}-\hspace{-0.6mm}z_{6}\hspace{-0.6mm}+\hspace{-0.6mm}z_{11}, \nonumber \\
& \hspace{-3mm} s_{1,3}\hspace{-0.6mm}+\hspace{-0.6mm}z_{1}\hspace{-0.6mm}-\hspace{-0.6mm}z_{2}, \hspace{0.5mm} s_{1,4}\hspace{-0.6mm}+\hspace{-0.6mm}z_{1}\hspace{-0.6mm}-\hspace{-0.6mm}z_{2}, \hspace{0.5mm} 0, \hspace{0.5mm} 0) \,, \nonumber \\[2.5mm]
t_{5, 2} &= (s_{1,2}\hspace{-0.6mm}+\hspace{-0.6mm}z_{2}\hspace{-0.6mm}-\hspace{-0.6mm}z_{3}, \hspace{0.5mm} z_{2}\hspace{-0.6mm}-\hspace{-0.6mm}z_{3}, \hspace{0.5mm} z_{2}\hspace{-0.6mm}-\hspace{-0.6mm}z_{3}, \hspace{0.5mm} 0, \hspace{0.5mm} 0, \hspace{0.5mm} 0, \hspace{0.5mm} \nonumber \\
&\hspace{-3mm}-\hspace{-0.6mm}s_{3,4}\hspace{-0.6mm}+\hspace{-0.6mm}z_{1}\hspace{-0.6mm}+\hspace{-0.6mm}z_{2}\hspace{-0.6mm}+\hspace{-0.6mm}z_{4}\hspace{-0.6mm}+\hspace{-0.6mm}z_{7}\hspace{-0.6mm}-\hspace{-0.6mm}z_{8}\hspace{-0.6mm}-\hspace{-0.6mm}z_{9}\hspace{-0.6mm}-\hspace{-0.6mm}z_{10}\hspace{-0.6mm}-\hspace{-0.6mm}z_{11}, \nonumber \\
& \hspace{-3mm} s_{1,3}\hspace{-0.6mm}+\hspace{-0.6mm}s_{1,4}\hspace{-0.6mm}+\hspace{-0.6mm}z_{1}\hspace{-0.6mm}+\hspace{-0.6mm}z_{2}\hspace{-0.6mm}+\hspace{-0.6mm}z_{4}\hspace{-0.6mm}+\hspace{-0.6mm}z_{7}\hspace{-0.6mm}-\hspace{-0.6mm}z_{8}\hspace{-0.6mm}-\hspace{-0.6mm}z_{9}\hspace{-0.6mm}-\hspace{-0.6mm}z_{10}\hspace{-0.6mm}-\hspace{-0.6mm}z_{11}, \nonumber \\
&\hspace{-3mm} s_{1,2}\hspace{-0.6mm}+\hspace{-0.6mm}s_{2,3}\hspace{-0.6mm}+\hspace{-0.6mm}z_{2}\hspace{-0.6mm}-\hspace{-0.6mm}z_{3}, \hspace{0.5mm}\hspace{-0.6mm}-\hspace{-0.6mm}s_{1,3}\hspace{-0.6mm}-\hspace{-0.6mm}s_{1,4}\hspace{-0.6mm}-\hspace{-0.6mm}s_{2,3}\hspace{-0.6mm}-\hspace{-0.6mm}s_{3,4}\hspace{-0.6mm}+\hspace{-0.6mm}z_{2}\hspace{-0.6mm}-\hspace{-0.6mm}z_{3}, \hspace{0.5mm} 0, \hspace{0.5mm} 0) \,, \nonumber \\[2.5mm]
t_{5, 3} &= (z_{9}\hspace{-0.6mm}-\hspace{-0.6mm}z_{1}, \hspace{0.5mm} \hspace{-0.6mm}-\hspace{-0.6mm}s_{1,3}\hspace{-0.6mm}-\hspace{-0.6mm}z_{1}\hspace{-0.6mm}+\hspace{-0.6mm}z_{9}, \hspace{0.5mm} \hspace{-0.6mm}-\hspace{-0.6mm}s_{1,3}\hspace{-0.6mm}-\hspace{-0.6mm}s_{2,3}\hspace{-0.6mm}-\hspace{-0.6mm}z_{1}\hspace{-0.6mm}+\hspace{-0.6mm}z_{9}, \hspace{0.5mm} 0, \hspace{0.5mm} 0, \hspace{0.5mm} 0, \hspace{0.5mm} \nonumber \\
& \hspace{-3mm} s_{3,4}\hspace{-0.6mm}-\hspace{-0.6mm}z_{1}\hspace{-0.6mm}-\hspace{-0.6mm}z_{4}\hspace{-0.6mm}+\hspace{-0.6mm}z_{5}\hspace{-0.6mm}+\hspace{-0.6mm}z_{9}, \hspace{0.5mm} \hspace{-0.6mm}-\hspace{-0.6mm}s_{1,3}\hspace{-0.6mm}-\hspace{-0.6mm}s_{2,3}\hspace{-0.6mm}-\hspace{-0.6mm}z_{1}\hspace{-0.6mm}-\hspace{-0.6mm}z_{4}\hspace{-0.6mm}+\hspace{-0.6mm}z_{5}\hspace{-0.6mm}+\hspace{-0.6mm}z_{9}, \nonumber \\
& \hspace{-3mm} z_{9}\hspace{-0.6mm}-\hspace{-0.6mm}z_{1}, \hspace{0.5mm} s_{3,4}\hspace{-0.6mm}-\hspace{-0.6mm}z_{1}\hspace{-0.6mm}+\hspace{-0.6mm}z_{9}, \hspace{0.5mm} 0, \hspace{0.5mm} 0) \,, \nonumber \\[2.5mm]
t_{5, 4} &= (z_{10}\hspace{-0.6mm}-\hspace{-0.6mm}z_{1}, \hspace{0.5mm} \hspace{-0.6mm}-\hspace{-0.6mm}s_{1,4}\hspace{-0.6mm}-\hspace{-0.6mm}z_{1}\hspace{-0.6mm}+\hspace{-0.6mm}z_{10}, \hspace{0.5mm} -s_{1,4}\hspace{-0.6mm}-\hspace{-0.6mm}s_{2,4}\hspace{-0.6mm}-\hspace{-0.6mm}z_{1}\hspace{-0.6mm}+\hspace{-0.6mm}z_{10}, \nonumber \\
&\hspace{-3mm} 0, \hspace{0.5mm} 0, \hspace{0.5mm} 0, \hspace{0.5mm} \hspace{-0.6mm}-\hspace{-0.6mm}z_{1}\hspace{-0.6mm}-\hspace{-0.6mm}z_{5}\hspace{-0.6mm}+\hspace{-0.6mm}z_{6}\hspace{-0.6mm}+\hspace{-0.6mm}z_{10}, \nonumber \\
&\hspace{-3mm} s_{1,2}\hspace{-0.6mm}+\hspace{-0.6mm}s_{1,3}\hspace{-0.6mm}+\hspace{-0.6mm}s_{2,3}\hspace{-0.6mm}-\hspace{-0.6mm}z_{1}\hspace{-0.6mm}-\hspace{-0.6mm}z_{5}\hspace{-0.6mm}+\hspace{-0.6mm}z_{6}\hspace{-0.6mm}+\hspace{-0.6mm}z_{10}, \nonumber \\
& \hspace{-3mm} s_{3,4}\hspace{-0.6mm}-\hspace{-0.6mm}z_{1}\hspace{-0.6mm}+\hspace{-0.6mm}z_{10}, \hspace{0.5mm} z_{10}\hspace{-0.6mm}-\hspace{-0.6mm}z_{1}, \hspace{0.5mm} 0, \hspace{0.5mm} 0) \,, \nonumber \\[2.5mm]
t_{5, 5} &= (2 z_{1}, \hspace{0.5mm} z_{1}\hspace{-0.6mm}+\hspace{-0.6mm}z_{2}, \hspace{0.5mm} \hspace{-0.6mm}-\hspace{-0.6mm}s_{1,2}\hspace{-0.6mm}+\hspace{-0.6mm}z_{1}\hspace{-0.6mm}+\hspace{-0.6mm}z_{3}, \hspace{0.5mm} 0, \hspace{0.5mm} 0, \hspace{0.5mm} 0, \nonumber \\
& \hspace{-3mm} z_{1}\hspace{-0.6mm}-\hspace{-0.6mm}z_{6}\hspace{-0.6mm}+\hspace{-0.6mm}z_{7}, \hspace{0.5mm} \hspace{-0.6mm}-\hspace{-0.6mm}s_{1,2}\hspace{-0.6mm}+\hspace{-0.6mm}2 z_{1}\hspace{-0.6mm}+\hspace{-0.6mm}z_{3}\hspace{-0.6mm}-\hspace{-0.6mm}z_{6}\hspace{-0.6mm}+\hspace{-0.6mm}z_{7}\hspace{-0.6mm}-\hspace{-0.6mm}z_{9}\hspace{-0.6mm}-\hspace{-0.6mm}z_{10}, \nonumber \\
& \hspace{-3mm} z_{1}\hspace{-0.6mm}+\hspace{-0.6mm}z_{9}, \hspace{0.5mm} z_{1}\hspace{-0.6mm}+\hspace{-0.6mm}z_{10}, \hspace{0.5mm} 0, \hspace{0.5mm} \hspace{-0.6mm}-\hspace{-0.6mm}2) \,, \nonumber \\[2.5mm]
t_{5, 6} &= (\hspace{-0.6mm}-\hspace{-0.6mm}z_{1}\hspace{-0.6mm}-\hspace{-0.6mm}z_{6}\hspace{-0.6mm}+\hspace{-0.6mm}z_{7}, \hspace{0.5mm} \hspace{-0.6mm}-\hspace{-0.6mm}z_{1}\hspace{-0.6mm}+\hspace{-0.6mm}z_{7}\hspace{-0.6mm}-\hspace{-0.6mm}z_{11}, \nonumber \\
&\hspace{-3mm} s_{1,2}\hspace{-0.6mm}+\hspace{-0.6mm}s_{3,4}\hspace{-0.6mm}-\hspace{-0.6mm}2 z_{1}\hspace{-0.6mm}-\hspace{-0.6mm}z_{3}\hspace{-0.6mm}-\hspace{-0.6mm}z_{4}\hspace{-0.6mm}+\hspace{-0.6mm}z_{8}\hspace{-0.6mm}+\hspace{-0.6mm}z_{9}\hspace{-0.6mm}+\hspace{-0.6mm}z_{10}, \hspace{0.5mm} 0, \hspace{0.5mm} 0, \hspace{0.5mm} 0, \nonumber \\
& \hspace{-3mm} \hspace{-0.6mm}-\hspace{-0.6mm}z_{1}\hspace{-0.6mm}+\hspace{-0.6mm}z_{6}\hspace{-0.6mm}+\hspace{-0.6mm}z_{7}, \hspace{0.5mm} s_{1,2}\hspace{-0.6mm}-\hspace{-0.6mm}2 z_{1}\hspace{-0.6mm}-\hspace{-0.6mm}z_{3}\hspace{-0.6mm}+\hspace{-0.6mm}z_{6}\hspace{-0.6mm}+\hspace{-0.6mm}z_{8}\hspace{-0.6mm}+\hspace{-0.6mm}z_{9}\hspace{-0.6mm}+\hspace{-0.6mm}z_{10}, \nonumber \\
& \hspace{-3mm} s_{3,4}\hspace{-0.6mm}-\hspace{-0.6mm}z_{1}\hspace{-0.6mm}-\hspace{-0.6mm}z_{4}\hspace{-0.6mm}+\hspace{-0.6mm}z_{5}\hspace{-0.6mm}-\hspace{-0.6mm}z_{6}\hspace{-0.6mm}+\hspace{-0.6mm}z_{7}, \hspace{0.5mm} \hspace{-0.6mm}-\hspace{-0.6mm}z_{1}\hspace{-0.6mm}-\hspace{-0.6mm}z_{5}\hspace{-0.6mm}+\hspace{-0.6mm}z_{7}, \hspace{0.5mm} 0, \hspace{0.5mm} 0) \,, \\[2.5mm]
t_{6, 1} &= (0, \hspace{0.5mm} 0, \hspace{0.5mm} 0, \hspace{0.5mm} \hspace{-0.6mm}-\hspace{-0.6mm}s_{1,3}\hspace{-0.6mm}-\hspace{-0.6mm}s_{1,4}\hspace{-0.6mm}-\hspace{-0.6mm}z_{6}\hspace{-0.6mm}+\hspace{-0.6mm}z_{11}, \hspace{0.5mm} \hspace{-0.6mm}-\hspace{-0.6mm}s_{1,4}\hspace{-0.6mm}-\hspace{-0.6mm}z_{6}\hspace{-0.6mm}+\hspace{-0.6mm}z_{11}, \nonumber \\
& \hspace{-3mm} z_{11}\hspace{-0.6mm}-\hspace{-0.6mm}z_{6}, \hspace{0.5mm} z_{1}\hspace{-0.6mm}-\hspace{-0.6mm}z_{2}\hspace{-0.6mm}-\hspace{-0.6mm}z_{6}\hspace{-0.6mm}+\hspace{-0.6mm}z_{11}, \nonumber \\
& \hspace{-3mm} \hspace{-0.6mm}-\hspace{-0.6mm}s_{1,2}\hspace{-0.6mm}-\hspace{-0.6mm}s_{1,3}\hspace{-0.6mm}-\hspace{-0.6mm}s_{1,4}\hspace{-0.6mm}+\hspace{-0.6mm}z_{1}\hspace{-0.6mm}-\hspace{-0.6mm}z_{2}\hspace{-0.6mm}-\hspace{-0.6mm}z_{6}\hspace{-0.6mm}+\hspace{-0.6mm}z_{11}, \hspace{0.5mm} 0, \hspace{0.5mm} 0, \hspace{0.5mm} z_{11}\hspace{-0.6mm}-\hspace{-0.6mm}z_{6}, \hspace{0.5mm} 0) \,, \nonumber \\[2.5mm]
t_{6, 2} &= (0, \hspace{0.0mm} 0, \hspace{0.0mm} 0, \hspace{0.0mm} s_{1,3}\hspace{-0.8mm}+\hspace{-0.8mm}s_{1,4}\hspace{-0.8mm}+\hspace{-0.8mm}z_{1}\hspace{-0.8mm}+\hspace{-0.8mm}z_{3}\hspace{-0.8mm}+\hspace{-0.8mm}z_{4}\hspace{-0.8mm}+\hspace{-0.8mm}z_{7}\hspace{-0.8mm}-\hspace{-0.8mm}z_{8}\hspace{-0.8mm}-\hspace{-0.8mm}z_{9}\hspace{-0.8mm}-\hspace{-0.8mm}z_{10}\hspace{-0.8mm}-\hspace{-0.8mm}z_{11}, \nonumber \\
& \hspace{-3mm} s_{1,3}\hspace{-0.6mm}+\hspace{-0.6mm}s_{1,4}\hspace{-0.6mm}+\hspace{-0.6mm}s_{2,3}\hspace{-0.6mm}+\hspace{-0.6mm}z_{1}\hspace{-0.6mm}+\hspace{-0.6mm}z_{3}\hspace{-0.6mm}+\hspace{-0.6mm}z_{4}\hspace{-0.6mm}+\hspace{-0.6mm}z_{7}\hspace{-0.6mm}-\hspace{-0.6mm}z_{8}\hspace{-0.6mm}-\hspace{-0.6mm}z_{9}\hspace{-0.6mm}-\hspace{-0.6mm}z_{10}\hspace{-0.6mm}-\hspace{-0.6mm}z_{11}, \nonumber \\
& \hspace{-3mm} \hspace{-0.6mm}-\hspace{-0.6mm}s_{1,2}\hspace{-0.6mm}-\hspace{-0.6mm}s_{3,4}\hspace{-0.6mm}+\hspace{-0.6mm}z_{1}\hspace{-0.6mm}+\hspace{-0.6mm}z_{3}\hspace{-0.6mm}+\hspace{-0.6mm}z_{4}\hspace{-0.6mm}+\hspace{-0.6mm}z_{7}\hspace{-0.6mm}-\hspace{-0.6mm}z_{8}\hspace{-0.6mm}-\hspace{-0.6mm}z_{9}\hspace{-0.6mm}-\hspace{-0.6mm}z_{10}\hspace{-0.6mm}-\hspace{-0.6mm}z_{11}, \nonumber \\
&\hspace{-3mm}-\hspace{-0.6mm}s_{3,4}\hspace{-0.6mm}+\hspace{-0.6mm}z_{1}\hspace{-0.6mm}+\hspace{-0.6mm}z_{2}\hspace{-0.6mm}+\hspace{-0.6mm}z_{4}\hspace{-0.6mm}+\hspace{-0.6mm}z_{7}\hspace{-0.6mm}-\hspace{-0.6mm}z_{8}\hspace{-0.6mm}-\hspace{-0.6mm}z_{9}\hspace{-0.6mm}-\hspace{-0.6mm}z_{10}\hspace{-0.6mm}-\hspace{-0.6mm}z_{11}, \nonumber \\
&\hspace{-3mm} s_{1,3}\hspace{-0.6mm}+\hspace{-0.6mm}s_{1,4}\hspace{-0.6mm}+\hspace{-0.6mm}z_{1}\hspace{-0.6mm}+\hspace{-0.6mm}z_{2}\hspace{-0.6mm}+\hspace{-0.6mm}z_{4}\hspace{-0.6mm}+\hspace{-0.6mm}z_{7}\hspace{-0.6mm}-\hspace{-0.6mm}z_{8}\hspace{-0.6mm}-\hspace{-0.6mm}z_{9}\hspace{-0.6mm}-\hspace{-0.6mm}z_{10}\hspace{-0.6mm}-\hspace{-0.6mm}z_{11}, \hspace{0.5mm} 0, \hspace{0.5mm} 0, \nonumber \\
& \hspace{-3mm}-\hspace{-0.6mm}s_{3,4}\hspace{-0.6mm}+\hspace{-0.6mm}z_{1}\hspace{-0.6mm}+\hspace{-0.6mm}z_{3}\hspace{-0.6mm}+\hspace{-0.6mm}z_{4}\hspace{-0.6mm}+\hspace{-0.6mm}z_{7}\hspace{-0.6mm}-\hspace{-0.6mm}z_{8}\hspace{-0.6mm}-\hspace{-0.6mm}z_{9}\hspace{-0.6mm}-\hspace{-0.6mm}z_{10}\hspace{-0.6mm}-\hspace{-0.6mm}z_{11}, \hspace{0.5mm} 0) \,, \nonumber \\[2.5mm]
t_{6, 3} &= (0, \hspace{0.5mm} 0, \hspace{0.5mm} 0, \hspace{0.5mm} z_{5}\hspace{-0.6mm}-\hspace{-0.6mm}z_{4}, \hspace{0.5mm} z_{5}\hspace{-0.6mm}-\hspace{-0.6mm}z_{4}, \hspace{0.5mm} s_{3,4}\hspace{-0.6mm}-\hspace{-0.6mm}z_{4}\hspace{-0.6mm}+\hspace{-0.6mm}z_{5}, \nonumber \\
& \hspace{-3mm} s_{3,4}\hspace{-0.6mm}-\hspace{-0.6mm}z_{1}\hspace{-0.6mm}-\hspace{-0.6mm}z_{4}\hspace{-0.6mm}+\hspace{-0.6mm}z_{5}\hspace{-0.6mm}+\hspace{-0.6mm}z_{9}, \hspace{0.5mm} \hspace{-0.6mm}-\hspace{-0.6mm}s_{1,3}\hspace{-0.6mm}-\hspace{-0.6mm}s_{2,3}\hspace{-0.6mm}-\hspace{-0.6mm}z_{1}\hspace{-0.6mm}-\hspace{-0.6mm}z_{4}\hspace{-0.6mm}+\hspace{-0.6mm}z_{5}\hspace{-0.6mm}+\hspace{-0.6mm}z_{9}, \hspace{0.5mm} 0, \hspace{0.5mm} 0, \nonumber \\
& \hspace{-3mm} s_{1,3}\hspace{-0.6mm}+\hspace{-0.6mm}s_{3,4}\hspace{-0.6mm}-\hspace{-0.6mm}z_{4}\hspace{-0.6mm}+\hspace{-0.6mm}z_{5}, \hspace{0.5mm} 0) \,, \nonumber \\[2.5mm]
t_{6, 4} &= (0, \hspace{0.5mm} 0, \hspace{0.5mm} 0, \hspace{0.5mm} \hspace{-0.6mm}-\hspace{-0.6mm}s_{3,4}\hspace{-0.6mm}-\hspace{-0.6mm}z_{5}\hspace{-0.6mm}+\hspace{-0.6mm}z_{6}, \hspace{0.5mm} z_{6}\hspace{-0.6mm}-\hspace{-0.6mm}z_{5}, \hspace{0.5mm} z_{6}\hspace{-0.6mm}-\hspace{-0.6mm}z_{5}, \nonumber \\
& \hspace{-3mm} -\hspace{-0.8mm}z_{1}\hspace{-0.8mm}-\hspace{-0.8mm}z_{5}\hspace{-0.8mm}+\hspace{-0.8mm}z_{6}\hspace{-0.8mm}+\hspace{-0.8mm}z_{10}, \hspace{0.5mm} s_{1,2}\hspace{-0.8mm}+\hspace{-0.8mm}s_{1,3}\hspace{-0.8mm}+\hspace{-0.8mm}s_{2,3}\hspace{-0.8mm}-\hspace{-0.8mm}z_{1}\hspace{-0.8mm}-\hspace{-0.8mm}z_{5}\hspace{-0.8mm}+\hspace{-0.8mm}z_{6}\hspace{-0.8mm}+\hspace{-0.8mm}z_{10}, \hspace{0.5mm} 0, \hspace{0.5mm} 0, \nonumber \\
&\hspace{-3mm} s_{1,4}\hspace{-0.6mm}-\hspace{-0.6mm}z_{5}\hspace{-0.6mm}+\hspace{-0.6mm}z_{6}, \hspace{0.5mm} 0) \,, \nonumber \\[2.5mm]
t_{6, 5} &= (0, \hspace{0.5mm} 0, \hspace{0.5mm} 0, \hspace{0.5mm} z_{1}\hspace{-0.6mm}-\hspace{-0.6mm}z_{6}\hspace{-0.6mm}+\hspace{-0.6mm}z_{7}\hspace{-0.6mm}-\hspace{-0.6mm}z_{9}\hspace{-0.6mm}-\hspace{-0.6mm}z_{10}, \hspace{0.5mm} \hspace{-0.6mm}-\hspace{-0.6mm}z_{6}\hspace{-0.6mm}+\hspace{-0.6mm}z_{7}\hspace{-0.6mm}-\hspace{-0.6mm}z_{10}, \nonumber \\
& \hspace{-3mm}-\hspace{-0.8mm}z_{1}\hspace{-0.8mm}-\hspace{-0.8mm}z_{6}\hspace{-0.8mm}+\hspace{-0.8mm}z_{7}, \hspace{0.5mm} z_{1}\hspace{-0.8mm}-\hspace{-0.8mm}z_{6}\hspace{-0.8mm}+\hspace{-0.8mm}z_{7}, \hspace{0.5mm} \hspace{-0.8mm}-\hspace{-0.8mm}s_{1,2}\hspace{-0.8mm}+\hspace{-0.8mm}2 z_{1}\hspace{-0.8mm}+\hspace{-0.8mm}z_{3}\hspace{-0.8mm}-\hspace{-0.8mm}z_{6}\hspace{-0.8mm}+\hspace{-0.8mm}z_{7}\hspace{-0.8mm}-\hspace{-0.8mm}z_{9}\hspace{-0.8mm}-\hspace{-0.8mm}z_{10}, \nonumber \\
&\hspace{-3mm} 0, \hspace{0.5mm} 0, \hspace{0.5mm} \hspace{-0.6mm}-\hspace{-0.6mm}z_{2}\hspace{-0.6mm}-\hspace{-0.6mm}z_{6}\hspace{-0.6mm}+\hspace{-0.6mm}z_{7}, \hspace{0.5mm} 0) \,, \nonumber \\[2.5mm]
t_{6, 6} &= (0, \hspace{0.5mm} 0, \hspace{0.5mm} 0, \hspace{0.5mm} \hspace{-0.6mm}-\hspace{-0.6mm}s_{3,4}\hspace{-0.6mm}+\hspace{-0.6mm}z_{4}\hspace{-0.6mm}+\hspace{-0.6mm}z_{6}, \hspace{0.5mm} z_{5}\hspace{-0.6mm}+\hspace{-0.6mm}z_{6}, \hspace{0.5mm} 2 z_{6}, \hspace{0.5mm} \hspace{-0.6mm}-\hspace{-0.6mm}z_{1}\hspace{-0.6mm}+\hspace{-0.6mm}z_{6}\hspace{-0.6mm}+\hspace{-0.6mm}z_{7}, \nonumber \\
& \hspace{-3mm} s_{1,2}\hspace{-0.6mm}-\hspace{-0.6mm}2 z_{1}\hspace{-0.6mm}-\hspace{-0.6mm}z_{3}\hspace{-0.6mm}+\hspace{-0.6mm}z_{6}\hspace{-0.6mm}+\hspace{-0.6mm}z_{8}\hspace{-0.6mm}+\hspace{-0.6mm}z_{9}\hspace{-0.6mm}+\hspace{-0.6mm}z_{10}, \hspace{0.5mm} 0, \hspace{0.5mm} 0, \hspace{0.5mm} z_{6}\hspace{-0.6mm}+\hspace{-0.6mm}z_{11}, \hspace{0.5mm} \hspace{-0.6mm}-\hspace{-0.6mm}2) \,. \nonumber
\end{align}

\noindent In contrast, for this example the {\sc Singular} command {\tt syz}
did not produce output after 56 hours of running time with
52 GB of RAM used.

\section{Conclusions}\label{sec:conclusions}

Integration-by-parts (IBP) identities between loop integrals
arise from the vanishing integration of total derivatives
in dimensional regularization. The condition that
a total derivative leads to an IBP identity which does not
involve dimension shifts can be stated as the syzygy equation
\eqref{eq:syzygy_equation}. We presented in
eqs.~\eqref{eq:syzygy_generators_components}--\eqref{eq:syzygy_generators}
an explicit generating set of solutions
of the syzygy equation, valid for any number of loops
and external momenta. In general, S-polynomial computations would
be required in order to obtain the syzygies. However, as
the Baikov polynomial \eqref{eq:definition_of_Baikov_polynomial}
is the determinant of a matrix, a generating set of syzygies can be
obtained from the Laplace expansion of the determinant.
Moreover, we showed that the syzygies needed for IBP identities
evaluated on a generalized-unitarity cut can be obtained immediately from
eqs.~\eqref{eq:syzygy_generators_components}--\eqref{eq:syzygy_generators}
by a straightforward module intersection computation.

We emphasize that the closed-form expressions in
eqs.~\eqref{eq:syzygy_generators_components}--\eqref{eq:syzygy_generators}
are valid for any number of loops and external legs. The only
quantities that depend on the graph in question
are the relations of the $z$-variables to the $x$-variables
in eqs.~\eqref{eq:relation_of_z_to_x}--\eqref{eq:definition_of_z}.
In particular, the closed-form expressions allow
the construction of purely $D$-dimensional IBP identities
in cases where S-polynomial based computations of syzygies
are not feasible. An example of the latter is the non-planar
double-pentagon diagram considered in Sec.~\ref{sec:example_NP_double_pentagon}.

Moreover, an important feature of the syzygies
eqs.~\eqref{eq:syzygy_generators_components}--\eqref{eq:syzygy_generators}
is that they are guaranteed to have degree at most one. In contrast, a generating
set of syzygies obtained from an S-polynomial computation would
in general have higher degrees. The fact that the syzygies obtained
here are of degree at most one is useful because as a result
the computation of solutions which satisfy further constraints
is dramatically simplified. For example, one may be interested in
imposing the further constraint on the total derivatives that
no integrals with squared propagators
are encountered in the IBP identities.

It is worth pointing out that ref.~\cite{Kosower:2018obg}
makes use of syzygies to construct IBP identities which involve
arbitrary numerator powers. These are then solved as difference equations
to obtain the IBP reductions. Based on preliminary tests of several two-loop examples,
the syzygies from Laplacian expansion considered here
can also produce recursive relations similar to the relations
in ref.~\cite{Kosower:2018obg}. This direction merits further investigation.

\section*{\uppercase{Acknowledgments}}

We thank Roman N. Lee for collaboration at an early stage of this work.
We also thank Jorrit Bosma, James Drummond, Johannes Henn, Harald Ita,
David Kosower, David Mond, Costas Papadopoulos and Mao Zeng for useful discussions.
The research leading to these results has received
funding from Swiss National Science Foundation (Ambizione grant PZ00P2
161341) and funding from the European Research Council (ERC) under
the European Union's Horizon 2020 research and innovation programme (grant agreement No 725110).
The work of Y.Z.~is also partially supported by the
Swiss National Science Foundation through the NCCR SwissMap, Grant number 141869.
The work of A.G.~is supported by the Knut and Alice Wallenberg Foundation under Grant \#2015-0083.
The work of K.J.L.~is supported by ERC-2014-CoG, Grant number 648630 IQFT.
The work of J.B.~and M.S.~was supported by Project II.5 of SFB-TRR 195
``Symbolic Tools in Mathematics and their Application'' of the German Research Foundation (DFG).

\bibliographystyle{h-physrev}
\bibliography{Syzygies_Laplace}
\end{document}